\newcommand{\GS}{\mathit{G_S}} 
\newcommand{\GT}{\mathit{G_T}} 
\newcommand{\GH}{\mathit{G_H}} 
\newcommand{\fsync}{{\sc FSync}\xspace}
\newcommand{\ssync}{{\sc SSync}\xspace}
\newcommand{\async}{{\sc Async}\xspace}
\newcommand{\sasync}{{\sc SAsync}\xspace}
\newcommand{\Wait}{{\tt Wait}\xspace}
\newcommand{\Look}{{\tt Look}\xspace}
\newcommand{\Compute}{{\tt Compute}\xspace}
\newcommand{\Move}{{\tt Move}\xspace}
\newcommand{\LCM}{{\tt LCM}\xspace}
\newcommand{\nil}{\mathit{nil}}
\newcommand{\Ex}{\mathbb{E}}  
\newcommand{\I}{\mathcal{I}} 
\renewcommand{\max}{\mathit{max}}
\newcommand{\Aut}[1]{\mbox{Aut}({#1})}
\newcommand{\Prop}{ \mathsf{Prop} }
\newcommand{\mbr}{\mathit{mbr}}
\newcommand{\bp}{\mathit{bp}}
\newcommand{\mbp}{\mathit{mbp}}
\newcommand{\w}{\mathit{w}}
\newcommand{\h}{\mathit{h}}
\newcommand{\PP}{\mathit{P^*}}
\newcommand{\LSS}{\textit{LSS}\xspace}
\newcommand{\LSF}{\mathit{\ell^F_{f_1}}}
\newcommand{\apf}{\mathit{APF}\xspace}
\newcommand{\RS}{\mathit{RS}}
\newcommand{\PPF}{\mathit{PPF}}
\newcommand{\Fin}{\mathit{Fin}}
\newcommand{\Term}{\mathit{Term}}
\newcommand{\A}{\mathcal{A}} 
\newcommand{\Aform}{\mathcal{A}_{\mathit{form}}}
\definecolor{linecomment}{rgb}{0.95, 0.1, 0.1}
\definecolor{comment}{rgb}{0.1, 0.1, 0.95}
\newcommand{\pre}{\mathtt{pre}}
\newcommand{\false}{\mathtt{false}}
\newcommand{\vguno}{\mathtt{g1}}
\newcommand{\vgenne}{\mathtt{gn}}
\newcommand{\vpfuno}{\mathtt{pf1}}
\newcommand{\vpfenne}{\mathtt{pfn}}
\newcommand{\vhrenne}{\mathtt{hrn}}
\newcommand{\vs}{\mathtt{s}}
\newcommand{\vqfuno}{\mathtt{qf1}}
\newcommand{\vdruno}{\mathtt{dr1}}
\newcommand{\vdrunop}{\mathtt{dr1'}}
\newcommand{\vhpuno}{\mathtt{hp'}}
\newcommand{\vhpdue}{\mathtt{hp''}}
\newcommand{\vrpeffe}{\mathtt{rpf}}
\newcommand{\predue}{ \vguno }
\newcommand{\pretre}{ \vguno \wedge \vhpdue \wedge \vdruno}
\newcommand{\prequattro}{ \vguno \wedge \vdruno \wedge \vgenne \wedge \vrpeffe}
\newcommand{\precinque}{ \vguno \wedge \vhpuno \wedge \vdruno \wedge \vhrenne \wedge \vpfenne }
\newcommand{\presei}{ \vguno \wedge \vdrunop \wedge \vpfuno }
\newcommand{\presette}{ \vqfuno }
\begin{document}
	
\title{Arbitrary Pattern Formation on Infinite Regular Tessellation Graphs%
\thanks{The work has been supported in part by the 
Italian National Group for Scientific Computation (GNCS-INdAM).}
}

\author{%
Serafino Cicerone\inst{1}, 
Alessia Di Fonso\inst{1}, 
Gabriele Di Stefano\inst{1}, 
Alfredo Navarra\inst{2}
}


\institute{Dipartimento di Ingegneria e Scienze dell'Informazione e Matematica,
        Università  degli Studi dell'Aquila, I-67100 
        L'Aquila, Italy.
\email{serafino.cicerone@univaq.it},
\email{alessia.difonso@graduate.univaq.it},
\email{gabriele.distefano@univaq.it}
\and 
Dipartimento di Matematica e Informatica,
        Università degli Studi di Perugia I-06123 
        Perugia, Italy.
\email{alfredo.navarra@unipg.it} 
}

\maketitle


\begin{abstract}
Given a set $R$ of robots, each one located at different vertices of an infinite regular tessellation graph, we aim to explore the \emph{Arbitrary Pattern Formation} ($\apf$) problem. 
Given a multiset $F$ of grid vertices such that $|R|=|F|$, $\apf$ asks for a distributed algorithm that moves robots so as to reach a configuration similar to $F$. Similarity means that robots must be disposed as $F$ regardless of translations, rotations, reflections. 

So far, as possible graph discretizing the Euclidean plane only the standard square grid has been considered in the context of the classical \emph{Look-Compute-Move} model. However, it is natural to consider also the other regular tessellation graphs, that are triangular and hexagonal grids. 

We provide a resolution algorithm for $\apf$ when the initial configuration is asymmetric and the considered topology is any regular tessellation graph. 
\end{abstract}

\keywords{Distributed Algorithms\and Mobile Robots\and Asynchrony\and Pattern Formation\and Graphs}

\section{Introduction}\label{Sec:Introduction}

In this paper, we consider the \emph{Arbitrary Pattern Formation} ($\apf$) task by means of a swarm of very weak - in terms of capabilities - robots moving on graphs. Initially, each robot occupies a different vertex of the graph. This task calls for a distributed algorithm that allows a set of autonomous mobile robots \emph{to form any specific but arbitrary geometric pattern given as input}. The pattern formation task is one of the basic primitives extensively studied in the context of robot-based computing systems.
Whether or not a mobile robot system can solve a given problem typically depends on the capabilities one assumes for robots. A common approach in distributed computing is to detect the minimal capabilities that are necessary so as robots can perform basic tasks. The rationale behind this approach is twofold: it is theoretically interesting to answer the minimality question; the weaker the model assumed to solve a task, the wider its applicability, including more powerful robots prone to faults.

\subsection{Robots' model}\label{ssec:model}
In this paper, robots are considered to be:
\begin{itemize}
\item \emph{Anonymous}: no unique identifiers;
\item \emph{Autonomous}: no centralized control;
\item \emph{Dimensionless}: no occupancy constraints, no volume, modeled as entities located on vertices of a graph;
\item \emph{Oblivious}: no memory of past events;
\item \emph{Homogeneous}: they all execute the same \emph{deterministic}\footnote{No randomization features are allowed.} algorithm;
\item \emph{Silent}: no means of direct communication;
\item \emph{Disoriented}: no common coordinate system, no common left-right orientation;
\end{itemize}
%

%
Each robot in the system has sensory capabilities allowing it to determine the location of other robots in the graph, relative to its own location.
Each robot refers in fact to a \emph{Local Coordinate System} (LCS) that might be different from robot to robot. Each robot follows an identical algorithm that is preprogrammed into the robot. 
The behavior of each robot can be described according to the sequence of four states: \Wait, \Look, \Compute, and \Move. Such states form a computational cycle (or briefly a cycle) of a robot.
\begin{enumerate}
\item \Wait. The robot is idle. A robot cannot stay indefinitely idle. 
\item  \Look. The robot observes the environment by activating its sensors which will return a snapshot of the positions of all other robots with respect to its LCS. Each robot is viewed as a point. Hence, the result of the snapshot (i.e., of the observation) is just a set of coordinates in its LCS.
\item  \Compute. The robot performs a local computation according to a deterministic algorithm $\A$ (we also say that the robot executes $\A$). The algorithm is the same for all robots, and the result of the \Compute phase is a destination point along with a path to reach it.
\item  \Move. If the destination point is the current vertex where $r$ resides, $r$ performs a $\nil$ movement (i.e., it does not move); otherwise it moves to the adjacent vertex selected along the computed path. 
\end{enumerate}

When a robot is in \Wait we say it is \emph{inactive}, otherwise it is \emph{active}. In the literature, the computational cycle is simply referred to as the \Look-\Compute-\Move (LCM) cycle, as during the \Wait phase a robot is inactive. 
Initially robots are inactive, but once the execution of an algorithm $\A$ starts - unless differently specified - there is no instruction to stop it, i.e., to prevent robots to enter their LCM cycles. Then, the \emph{termination} property for $\A$ can be stated as follows: once robots have reached the required goal by means of $\A$, from there on robots can perform only the $\nil$ movement. 

During the \Look phase, robots can perceive \emph{multiplicities}, that is whether a same point is occupied by more than one robot. The multiplicity detection capability might be \emph{local} or \emph{global}, depending whether the multiplicity is detected only by robots composing the multiplicity or by any robot performing the \Look phase, respectively. Moreover, the multiplicity detection can be \emph{weak} or \emph{strong}, depending whether a robot can detect only the presence of a multiplicity or if it perceives the exact number of robots composing the multiplicity, respectively. In this work we assume that each robot is endowed with the global strong multiplicity detection.

Concerning the movements, in the graph environment moves are always considered as instantaneous. This results in always perceiving robots on vertices and never on edges during Look phases. Hence, robots cannot be seen while moving, but only at the moment they may start moving or when they arrived. The rationale behind this assumption is that the graph may model a communication network, whereas robots model software agents.

\begin{figure}[t]
\begin{center}
\scalebox{0.60}{\input{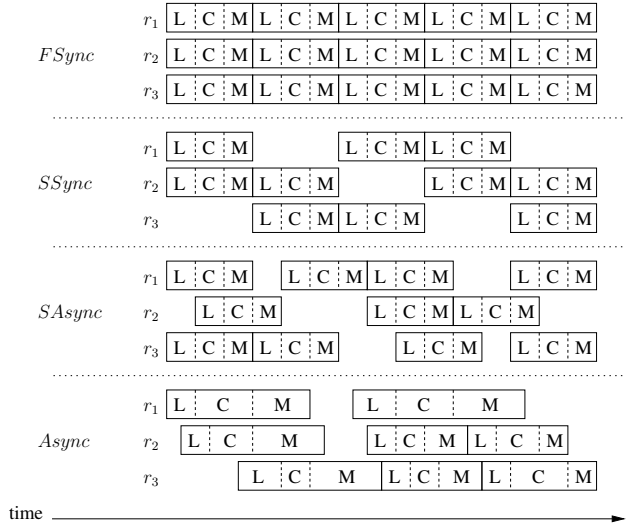}}
\caption{The execution model of computational cycles for each of \fsync, \ssync, \sasync, and \async\ robots. The inactivity of robots is implicitly represented by empty time periods.}\label{fig:models-a}
\end{center}
\end{figure}

We assume that cycles are performed according to the weakest Asynchronous scheduler (\async) (cf.~\cite{BCM16,CDN18d,CDN19,CDN18c,DFSY15,FPSW08,GM10}): the robots are activated independently, and the duration of each phase is finite but unpredictable (the activation of each robot can be thought of as decided by the adversary). As a result, robots do not have a common notion of time. Moreover, according to the definition of the \Look phase, a robot does not perceive whether other robots are moving or not. Hence, robots may move based on outdated perceptions. In fact, due to asynchrony, by the time a robot takes a snapshot of the configuration, this might have drastically changed once the robot starts moving. The scheduler determining the cycles timing is assumed to be fair, that is, each robot becomes active and performs its cycle within finite time and infinitely often. Figure~\ref{fig:models-a} compares the \async scheduler with the other scheduler proposed in the literature. In the figure, the \Wait state is implicitly represented by the time while a robot is inactive. In particular, it shows that in the Fully-synchronous (\fsync) scheduler all robots are always active, and the activation phase can be logically divided into global rounds: for all $i\geq 1$, all robots start the $i$-th LCM cycle simultaneously and synchronously execute each phase. 

The Semi-synchronous (\ssync, cf.~\cite{SY99,YS10,YUKY15}) scheduler coincides with the \fsync model, with the only difference that some robots may not start the $i$-th LCM cycle for some $i$ (some of the robots might be in the \Wait state), but all of those who have started the $i$-th cycle synchronously execute each phase.

The Semi-asynchronous (\sasync, cf.~\cite{CDN18a}) still maintains a sort of synchronous behavior as each phase lasts the same amount of time, but robots can start their LCM cycles at different times. It follows that while a robot is performing a \Look phase, other active robots might be performing the \Compute or the \Move phases. 

Clearly, the four synchronization schedulers induce the following hierarchy (see, e.g.~\cite{CDN18a,DFPSY16,DDFN18}): \fsync robots are more powerful (i.e. they can solve more tasks) than \ssync robots, that in turn are more powerful than \sasync  robots, that in turn are more powerful than \async robots. This simply follows by observing that  the adversary can control more parameters in \async than in \sasync, and it controls more parameters in \sasync than in \ssync and \fsync. In other words, protocols designed for \async robots also work for \sasync, \ssync and \fsync robots. Contrary, any impossibility result stated for \fsync robots also holds for \ssync, \sasync and \async robots.  

In the \async scheduler, the activations of the robots determine specific ordered time instants. Let $C(t)$ be the configuration observed by some robots at time $t$ during their \Look phase, and let $\{t_i : i = 0,1,\ldots \}$, with $t_i < t_{i+1}$,  be the set of all time instances at which at least one robot takes the snapshot $C(t_i)$. Since the information relevant for the computing phase of each robot is the order in which the different snapshots occur and not the exact time in which each snapshots is taken, then without loss of generality we can assume $t_i = i$ for all $i = 0,1,\ldots$. Then, an \emph{execution} of an algorithm $\A$ from an initial configuration $C$ is a sequence of configurations $\Ex : C(0),C(1),\ldots$, where $C(0)=C$ and $C(t+1)$ is obtained from $C(t)$ by moving some robot according to the result of the \Compute phase as implemented by $\A$. Notice that this definition of execution works also for the other schedulers. Moreover, given an algorithm $\A$, in \async (but also in \sasync and \sasync) there exists more than one execution from $C(0)$ depending on the activation of the robots (which depends on the adversary).

\subsection{Previous work}
For robots moving on the Euclidean plane, a restricted version of $\apf$ has been first solved in~\cite{DPV10}. In fact, the proposed algorithm requires at least $n\geq 4$ asynchronous robots endowed with chirality, that is robots share a common handedness. Moreover, the possible patterns exclude the possibility to form multiplicities. The answer to this restricted setting for $\apf$ provided a nice characterization of the problem that was shown to be equivalent to Leader Election within the same set of assumptions. In particular, the configurations from which the proposed algorithm could output any pattern (without multiplicities) are the so-called \emph{leader configurations}. These are configurations of robots (including some symmetric ones) from which it is possible to elect a leader. Attempts to remove those restrictions can be found in~\cite{BT18,YY14}, but randomization techniques are used. In~\cite{CDN19}, instead $\apf$ has been solved by means of a deterministic algorithm, without chirality and allowing multiplicities.
Further investigations of $\apf$ in the Euclidean plane referring to slightly different models can be found in~\cite{BKS20,FPSW05}. It is worth mentioning that when multiplicities are allowed for the patterns, the degenerate case of point formation (aka \emph{Gathering}) is included in $\apf$. Actually, the gathering task has been fully characterized in~\cite{CFPS12}. It constitutes a very special case that deserves main attention.

For robots moving on graphs, and in particular on an infinite square grid, $\apf$ has been recently addressed in~\cite{BAKS19}. The initial configuration is assumed to be asymmetric and still the allowed patterns do not contain multiplicities. Hence, the considered $\apf$, so far does not include gathering. Gathering on infinite or finite square grids has been fully characterized in~\cite{DDKN12,DN17}, also considering the minimization of the overall travelled distances.

\subsection{Our results} 
Our investigation for $\apf$ on graphs has started by considering square grids allowing also multiplicities in the patterns. Then we realized that a natural extension of the problem is to consider any regular tessellation graph as discretization of the Euclidean plane, that is also hexagonal and triangular grids deserve investigation. In particular, the latter can be considered as the most general topology in terms of possible symmetries and trajectories. 
In this paper, we address the resolution of $\apf$, including multiplicities, on all the three regular tessellations by providing a unique algorithm. The algorithm is first described in details with respect to the triangular grid,  when the initial configuration is asymmetric. We follow a formal design and analysis to provide our algorithm, along with the correctness proof. To this aim, we used the design methodology proposed in~\cite{CDN20b}.
Furthermore we revisit the algorithm with respect to both the square and the hexagonal grids, pointing out any possible deviations required with respect to the specific topology.

\subsection{Outline} 
This paper is organized as follows. Next section first formally defines the addressed problem and then it introduces the notation used by the provided algorithm called $\Aform$. Section~\ref{sec:algorithm} provides a high-level description of $\Aform$ designed by also remarking the strategy underlying the algorithm. Section~\ref{sec:formalization} formalizes the algorithm and provides the correctness. Since all the details are given with respect to the triangular grid, in Section~\ref{sec:extensions} we revisit the algorithm with respect to both the square and the hexagonal grids. Section~\ref{sec:conclusion} concludes the paper by highlighting some final remarks. 


\section{Problem definition and basic notation}\label{sec:problem}
The topology where robots are placed on is represented by a simple, undirected, and connected graph $G=(V,E)$, with vertex set $V$ and edge set $E$. 
A function $\lambda: V\to \mathbb{N}$ represents the number of robots on each vertex of $G$, and we call $C=(G,\lambda)$ a \emph{configuration} whenever $\sum_{v\in V} \lambda(v)$ is bounded and greater than zero. A vertex $v\in V$ such that $\lambda(v)> 0$ is said \emph{occupied}, \emph{unoccupied} otherwise. A \emph{multiplicity} occurs in any vertex $v\in V$ such that $\lambda(v)> 1$.

\subsection{Configurations on tessellation graphs}%
In this work we consider $G$ as an infinite graph generated by a \emph{plane tessellation}. 
A tessellation is a tiling of a plane with polygons without overlapping. A \emph{regular tessellation} is a tessellation which is formed by just one kind of regular polygons of side length $1$ and in which the corners of polygons are identically arranged. According to~\cite{GS87}, there are only three regular tessellations, and they are generated by squares, equilateral triangles or regular hexagons (see Fig.~\ref{fig:tessellation}).
An infinite lattice of a regular tessellation is a lattice formed by taking the vertices of the regular polygons in the tessellation as the points of the lattice. A graph $G$ is induced by the point set $S$ if the vertices of $G$ are the points in $S$ and its edges connect vertices that are distance $1$ apart. A \emph{tessellation graph} of a regular tessellation is the infinite graph embedded into the Euclidean plane induced by the infinite lattice formed by that tessellation~\cite{Ionascu12}. We denote by $\GS$ ($\GT$ and $\GH$, resp.) the tessellation graphs induced by the regular tessellations generated by squares (equilateral triangles and regular hexagons, resp.). In this work we consider configurations $C=(G,\lambda)$ where $G\in \{\GS,\GT,\GH\}$. 


\begin{definition}\label{def::canonical}
Given a graph $G\in \{\GS,\GT,\GH\}$, any line parallel to a subset of edges of $G$ is called a \emph{canonical direction}. The smallest angle formed by the available canonical directions is called the \emph{canonical angle}.
\end{definition}

According to Definition~\ref{def::canonical}, in $\GS$ there are just two canonical directions and the canonical angle is of $90^\circ$. In both $\GT$ and $\GH$ there are three canonical directions and the canonical angle is of $60^\circ$.

\begin{figure}[t]
\centering
\includegraphics[scale=0.75]{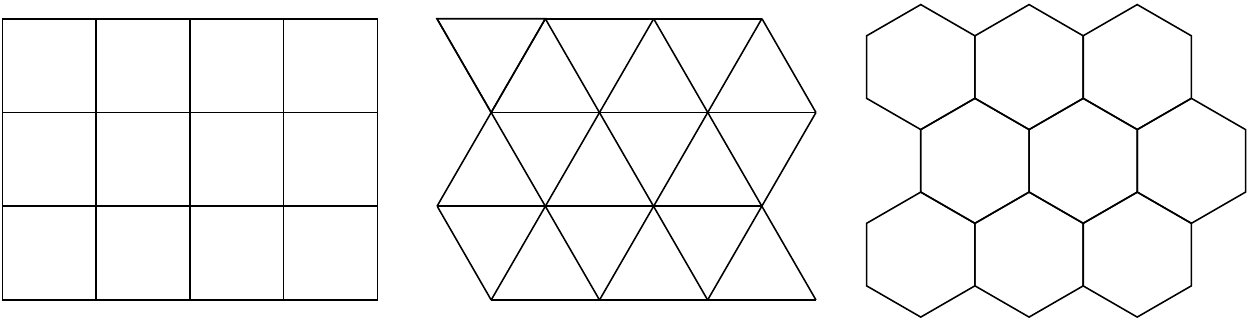}
\caption{Part of regular plane tessellations.}
\label{fig:tessellation}
\end{figure}

\subsection{Configuration automorphisms and symmetries}
Two undirected graphs $G=(V,E)$ and $G'=(V',E')$ are \emph{isomorphic} if there is a bijection $\varphi$ from $V$ to $V'$ such that $\{u,v\} \in E$ if and only if $\{\varphi(u),\varphi(v)\} \in E'$. An \emph{automorphism} on a graph $G$ is an isomorphism from $G$ to itself, that is a permutation of the vertices of $G$ that maps edges to edges and non-edges to non-edges. The set of all automorphisms of $G$, under the composition operation, forms a group called \emph{automorphism group} of $G$ and denoted by $\Aut{G}$. If $|\Aut{G}|=1$, that is $G$ admits only the identity automorphism, then $G$ is said \emph{asymmetric}, otherwise it is said \emph{symmetric}. Two distinct vertices $u,v\in V$ are \emph{equivalent} if there exists an automorphism $\varphi\in \Aut{G}$ such that $\varphi(u)=v$.

The concept of graph automorphism can be extended to configurations in a natural way: (1) two configurations $C=(G,\lambda)$ and $C'=(G',\lambda')$ are isomorphic if $G$ and $G'$ are isomorphic via an automorphism $\varphi\in \Aut{G}$ and $\lambda(v)=\lambda'(\varphi(v))$ for each vertex $v$ in $G$; (2) an automorphism of a configuration $C=(G,\lambda)$ is an isomorphism from $C$ to itself, and (3) the set of all automorphisms of $C$ forms a group under the composition operation that we call automorphism group of $C$ and denote as $\Aut{C}$. 
Moreover, if $|\Aut{C}|=1$ we say that $C$ is \emph{asymmetric}, otherwise it is \emph{symmetric}. Two distinct robots $r$ and $r'$ in a configuration $(G,\lambda)$ are \emph{equivalent} if there exists $\varphi\in \Aut{C}$ that makes equivalent the vertices in which they reside. Note that $\lambda(u)=\lambda(v)$ whenever $u$ and $v$ are equivalent. Moreover, if $u$ and $v$ are equivalent, a robot $r$ cannot distinguish its position at vertex $u$ from robot $r'$ located at vertex $v = \varphi(u)$. As a consequence, no algorithm can distinguish between two equivalent robots. 

%
In general, no algorithm can avoid that the two equivalent \async robots start the computational cycle simultaneously. In such a case, there might be a so called \emph{pending move} or \emph{pending robot}, that is one of the two robots performs its entire computational cycle while the other has not started or not yet finished its Move phase. Formally, a robot $r$ is pending in a configuration $C(t)$, if at time $t$ robot $r$ is active, has taken a snapshot $C(t')\neq C(t)$ with $t' < t$, and is planning to move or is moving with a non-nil trajectory. 
Clearly, any other robot $r'$ is not aware whether there is a pending robot $r$, that is it cannot deduce such an information from the snapshot acquired in the Look phase. This fact greatly increases the difficulty to devise algorithms for symmetric configurations. Notice that all such difficulties are completely removed if an algorithm produces always \emph{stationary} configurations: a configuration $C(t)$ is called \emph{stationary} if there are no pending robots in $C(t)$. A way to produce stationary configurations is to guarantee that an algorithm always moves one robot at a time.

\smallskip
Concerning the configurations addressed in this work, it is not difficult to see that any $C=(G,\lambda)$, with $G\in \{\GS,\GT,\GH\}$, admits two types of automorphisms only: \emph{reflections}, defined by a reflection axis which acts as a mirror; \emph{rotations}, defined by a center and an angle of rotation. All the reflection axes are of two types: the reflection axes of the considered regular polygons and those coincident with any side of the regular polygons. The centers of possible rotations can be located only on specific points of the regular polygons: on the center, on one vertex, or on the middle point of a side. The rotation angle is specific of each given tessellation graph. 

\subsection{The Arbitrary Pattern Formation ($\apf$) problem}%
A configuration $C=(G,\lambda)$, with $G=(V,E)$, is \emph{initial} if both the following conditions hold: (1) each robot is idle and placed on a different vertex, that is $\lambda(v)\leq 1$ for each $v\in V$; (2) $C$ is asymmetric. The set containing all the initial configurations is denoted by $\I$.

The goal of the $\apf$ problem is to design a distributed algorithm $\A$ that guides the robots to form a fixed arbitrary pattern $F$ starting from any configuration $C=(G,\lambda)$ such that $G\in \{\GS,\GT,\GH\}$ and $C\in \I$. The pattern $F$ is a multiset of vertices, given in any coordinate system, indicating the corresponding target vertices in the tessellation graph $G$. It constitutes the input for all robots. Due to absence of a common global coordinate system, the robots decide that the pattern is formed when the current configuration becomes ``similar'' to $F$ with respect to translations, rotations, reflections. 
The problem can be formalized as follows: 
an algorithm $\A$ solves the $\apf$ problem for an initial configuration $C$ if, for each possible execution $\Ex : C=C(0),C(1),\ldots$ of $\A$, there exists a finite time instant $t^*>0$ such that $C(t^*)$ is similar to $F$ and no robot moves after $t^*$, i.e., $C(t) = C(t^*)$ holds for all $t\ge t^*$. 

\subsection{Notation}
Here we introduce some concepts and notation used to describe the proposed algorithm. 
Given a configuration $C=(G,\lambda)$, we use $R=\{r_1,r_2,\ldots,r_n\}$ to denote the set containing all the $n$ robots located on $G$ (we recall that robots are anonymous and such a notation is used only for the sake of presentation). The distance $d(u,v)$ between two vertices $u,v\in V$ is the number of edges of a shortest path connecting $u$ to $v$. We extend the notion of distance to robots: $d(r_i,r_j)$ denotes the distance between the two vertices in which the robots reside. Symbol $D(r)$ is used to denote the \emph{sum of distances} of $r\in R$ from any other robot, that is $D(r) = \sum_{r_i\in R\setminus \{r\}} d(r,r_i)$.  
 
Given a set of points $P$ in the plane, $\mbr(P)$ represents the \emph{minimum bounding rectangle} of $P$, that is the rectangle enclosing all the points in $P$ defined as follows: its sides are parallel to the Cartesian axes and each pair of parallel sides are as close as possible. According to the definition we get that $\mbr(P)$ is unique. 
This definition can be easily extended to a set of robots $R$ placed on the tessellation graph $\GS$ where the canonical directions are just two, and they can naturally play the role of the Cartesian axes. Unfortunately, it does not work when $R$ is placed on tessellation graphs such as $\GT$ or $\GH$. To generalize it, we move to the concept of \emph{bounding parallelogram} $\bp(R)$, defined as any parallelogram enclosing all robots, with sides parallel to two of the three available canonical directions, and with each pair of parallel sides as close as possible. Since $\GT$ or $\GH$ admit three canonical directions, it can be observed that the bounding parallelogram of $R$ is not unique. In fact, there are three possible bounding rectangles (e.g., see Fig.~\ref{fig:mbr}).  

%
Given any $\bp(R)$, we denote by $\h(\bp(R))$ and $\w(\bp(R))$, with $\h(\bp(R)) \leq \w(\bp(R))$, the width and height of $\bp(R)$, respectively. 
Similarly, $\h(\bp(F))$ and $\w(\bp(F))$ 
are used to denote the same values with respect to $\bp(F)$. 

\begin{figure}[t]
   \graphicspath{{fig/}}
   \centering
   \def\svgwidth{\columnwidth}
   {\large\scalebox{0.6}{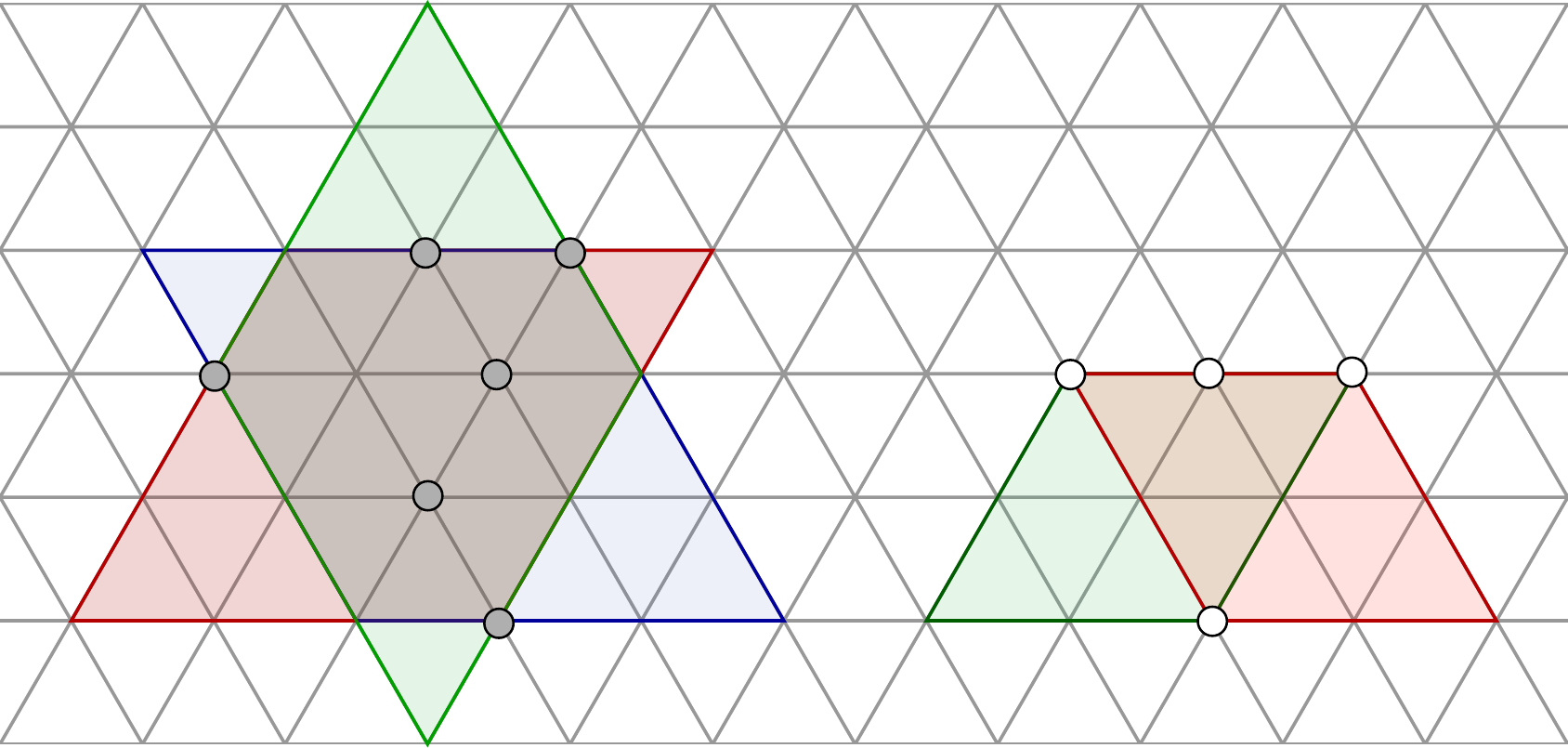}}
   \caption{%
\textit{(left)} An initial configuration $C$ with $n=6$ robots. It shows that $\bp(R)$ is not unique in $\GT$.  The red parallelogram generates the \LSS. The leading corner is $A$ and the leading direction is $AB$. The unique \LSS is $\ell = (0,0,0,1,~0,0,1,0, ~1,0,1,0, ~0,1,1,0)$. 
Notice that robot $r_1$ has maximum sum of distances, with value $D(r_1)=13$. \textit{(right)} A possible pattern $F$ to be formed. The number close to a vertex refers to a multiplicity. Since $F$ is symmetric, there are two (equivalent) $\mbp(F)$.  
}
\label{fig:mbr}
\end{figure}


Let $\bp(R)$ be any bounding parallelogram of $R$. We associate a sequence of integers to each canonical corner of $\bp(R)$ (e.g., corners $A$ and $C$ in Fig.~\ref{fig:mbr}). The sequence associated with a canonical corner $A$ is defined as follows. Scan the finite grid enclosed by $\bp(R)$ from $A$ along $h(\bp(R))$ (say, from $A$ to $B$) and sequentially all grid lines parallel to $AB$ in the same direction. For each grid vertex $v$, put $\lambda(v)$ in the sequence. Denote the obtained sequence as $s(AB)$. 
Being $h(\bp(R)) = w(\bp(R))$ in the example, from $A$ it is also possible to obtain the sequence $s(AD)$, and hence four sequences can be defined in total, two for the corner $A$ and two for the corner $C$. If any two of these sequences are equal, then it implies that the configuration admits a (reflectional or rotational) symmetry. We denote by \LSS the lexicographically smallest sequence. It is unique by definition.
 
The canonical corner from which an \LSS starts is called the \emph{leading corner}; the canonical direction from the leading corner used to create the \LSS is called the \emph{leading direction}. The \LSS of a given $\bp(R)$ is denoted as $\ell(\bp(R))$, or simply as $\ell$ when $\bp(R)$ can be inferred by the context. 

\begin{definition}\label{def:mbr}
Let $C=(G,\lambda)$ be a configuration with $G\in \{\GS,\GT,\GH\}$ and set of robot $R$. A \emph{minimum bounding parallelogram} $\mbp(R)$ is defined as any parallelogram $\bp(R)$ with sides parallel to two canonical directions of $G$, with $h(\bp(R))$ minimum, and with minimum \LSS in case of ties.
\end{definition}
Any asymmetric configurations admits exactly one $\mbp(R)$ whereas symmetric configurations admit multiple $\mbp(R)$'s. However, the \LSS's associated to such $\mbp(R)$'s are all the same.
%

\section{Description of the algorithm}\label{sec:algorithm}
In this section, we provide a high-level description of our algorithm $\Aform$ designed to solve $\apf$ for any initial configuration $C=(\GT,\lambda)$ composed of $n$ \async robots endowed with the global strong multiplicity detection and with all the minimal capabilities recalled in Section~\ref{ssec:model}. We assume $n\ge 3$, since for $n = 1$ the $\apf$ problem is trivial and for $n = 2$ we get that $C$ is symmetric. Concerning the pattern $F$, it might contain multiplicities. 

\subsection{The strategy}\label{ssec:strategy}
In general, a single robot has rather weak capabilities with respect to the general problem it is asked to solve along with other robots (we recall that robots have no direct means of communication). For this reason, any resolution algorithm should be based on a preliminary decomposition approach: the problem should be divided into a set of sub-problems so that each sub-problem is simple enough to be thought of as a ``task'' to be performed by (a subset of) robots. This subdivision could require several steps before obtaining the definition of such simple tasks, thus generating a sort of hierarchical structure.

Following this approach, $\apf$ is initially divided into four sub-problems denoted as Reference System ($\RS$), Partial Pattern Formation ($\PPF$), Finalization ($\Fin$), and Termination ($\Term$). Some of these sub-problems are further refined until the corresponding tasks can be suitably formalized according to the assumed capabilities of the robots. This leads to the following decomposition:
\begin{itemize}
\item
\emph{Reference System ($\RS$ = How to embed $F$ on $\GT$)}. This sub-problem concerns one of the main difficulties arising when the general pattern formation problem is addressed: the lack of a unique embedding of $F$ on $G_T$ that allows each robot to uniquely identify its target (the final destination vertex to form the pattern). In particular, $\RS$ can be described as the problem of moving or matching some (minimal number of) robots into specific positions such that they can be used by any other robot as a common reference system. These robots are called \emph{guards}. The realized reference system should imply a unique mapping from robots to targets, and this mapping should be maintained along all the movements of robots. In our strategy $\RS$ is further divided into three sub-problems denoted as $\RS_{1a}$, $\RS_{1b}$, and $\RS_{2}$. These sub-problems are simple enough to be associated to three tasks named $T_1$, $T_2$ and $T_3$, respectively. The first two are devoted to place the first guard denoted as $r_1$, whereas the third fixes the position of a second guard denoted as $r_n$. Once such positions are reached by the two guards, the requested reference system is given by two lines passing through the vertices occupied by the guards and forming a canonical angle between them. When the reference system is created, all the robots except the guards result to be located in a specific quadrant called $Q^-$.
\item
\emph{Partial Pattern Formation ($\PPF$ = How to form part of $F$)}. 
This sub-problem is associated with task $T_4$ and it is addressed only once $\RS$ is solved.  It concerns the formation of a pattern similar to part of $F$ by using robots in $R''=R\setminus \{r_1,r_n\}$ only.  Thanks to the common reference system, all robots can agree on embedding $F$ on a quadrant denoted as $Q^+$ and different from $Q^-$.  During the task, all the $n-2$ robots in $R''$  will be moved from $Q^-$ to the quadrant $Q^+$. Robots are moved one at a time so that no undesired collisions are created.
\item
\emph{Finalization ($\Fin$ = How to finally move $r_1$ and $r_n$ so that $F$ is formed)}. 
It refers to the so-called finalization task and occurs when the only robots not well positioned according to $F$ are the guards. It is worth to mention that while moving guards $r_1$ and $r_n$, the common reference system is lost. However, we are able to guarantee that robots can always detect they are solving $\Fin$ and that the two robots, by performing ad-hoc movements, can reach their targets so that the pattern $F$ is correctly completed. $\Fin$ is divided into three tasks: $T_5$ concerns the movement of $r_n$, whereas $T_6$ and $T_7$ are related to the movement of $r_1$. 
\item
\emph{Termination ($\Term$)}. It refers to the requirement of letting robots recognize the pattern has been formed, hence no more movements are required. In our strategy, a task $T_8$ is designed to address this problem. Clearly, only $\nil$ movements are allowed and it is not possible to switch to any other task.
\end{itemize}
In the remainder of the section we provide details for each designed task.

\subsection{Task $T_1$}\label{ssec:T1}
It selects a robot denoted as $r_1$ (the first guard) such that $D(r_1)$ is maximum (cf Fig.~\ref{fig:mbr}). In case of ties, $r_1$ has the minimum position in $\ell(\mbp(R))$ -- recall that the input configuration is asymmetric and hence $\mbp(R)$ is unique. Let $R'=R\setminus\{r_1\}$, during this task $r_1$ moves through any shortest path toward to the closest vertex that satisfies the following Boolean variable:

\begin{itemize}
\item 
     \emph{$\vguno$ = exists a unique line parallel to a canonical direction passing through $r_1$ and each $\bp(R')$.}
\end{itemize}
Note that, when $\vguno$ holds we identify the unique line passing through $r_1$ and each $\bp(R')$ as the \emph{line induced} by $\vguno$.

\begin{figure}[t]
   \graphicspath{{fig/}}
   \centering
   \def\svgwidth{\columnwidth}
   \large\scalebox{0.8}{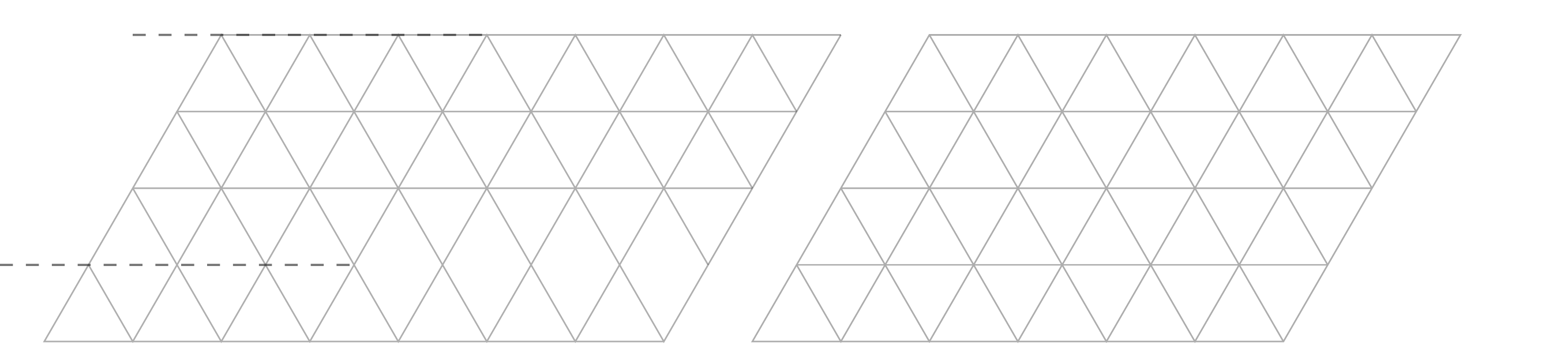}
 \caption{%
\textit{(left)} Visualization of sub-problem $\RS_{1a}$ concerning the initial movement of $r_1$ (cf. configuration $C$ in Fig.~\ref{fig:mbr}). 
\textit{(right)} Visualization of sub-problem $\RS_{1b}$ concerning the final destination of $r_1$. Once $r_1$ stops, all the items necessary to define the reference system can be settled (cf Remark~\ref{rem:afterT2}).
}
\label{fig:T1}
\end{figure}

%
%

\subsection{Task $T_2$}\label{ssec:T2}
In this task we assume true the variable $\vguno$ holding at the end of task $T_1$ - this must be intended as a \emph{pre-condition} imposed by our strategy in order to correctly perform $T_2$. The main aim of this task is twofold: (1) to move $r_1$ so that its position allows to define the $X$-axis and (2) to identify the second guard $r_n$. Anyway, additional properties are guaranteed when the task is completed. 

When the task starts, the role of $r_1$ is assigned to the robot $r$ such that $D(r)$ is maximum whereas the second guard $r_n$ is identified as follows:
\begin{itemize}
\item
Let $L$ be the line induced by $\vguno$.
It can be observed that there are exactly two distinct $\bp(R')$'s with sides parallel to $L$. Let $L_1$ and $L_2$ be the two lines parallel to $L$ shared by the two $\bp(R')$'s (cf Fig.~\ref{fig:T1}). Denote the two $\bp(R')$'s as $P'$ and $P''$, and denote as $S'$ ($S''$, resp.) the side of $P'$ ($P''$, resp.) which lies neither on $L_1$ nor on $L_2$ and is further from $r_1$. 
In particular, $P'$ ($P''$, resp.) is the parallelogram having the canonical angle formed by the intersection of $S'$ ($S''$, resp.) and $L_1$ ($L_2$, resp.) - the red parallelogram in Fig.~\ref{fig:T1}. 
 Denote as $r'_n$ ($r''_n$, resp.) the robot on $S'$ ($S''$, resp.) closest to $L_1$ ($L_2$, resp.). The second guard useful to define the reference system is selected between $r'_n$ and $r''_n$.

Robot $r_1$ considers the line $L'_1$ ( $L'_2$, resp.) defined as $L_1$ ( $L_2$, resp.) but referred to $R'\setminus \{r'_n\}$ ( $R'\setminus \{r''_n\}$, resp. ) instead of $R'$. Then $r_1$ selects the closest line between $L'_1$ and $L'_2$ (it arbitrarily selects one of the two in case of ties). Without loss of generality, assume that $r_1$ selects $L'_1$. According to this choice, $r_1$ promotes $r'_n$ to be $r_n$, that is the second guard (symmetrically, if $r_1$ selects $L_2$, then $r''_n$ is promoted). 
\end{itemize}
After computing the second guard, the robots have enough information to identify the requested common reference system, as remarked in the following statement.

\begin{remark}\label{rem:afterT2}
\emph{
After computing the second guard, robots have sufficient information to compute a common reference system. In fact, the line 
between $L'_1$ and $L'_2$ selected by $r_1$ and letting all robots in $R'$ in the same half-plane defines the $X$-axis, and this axis must be intended as directed from $r_1$ to all the other robots; all vertices in the half-plane containing robots in $R'$ are considered with negative $Y$-coordinates. The second guard $r_n$ is induced by the line between $L'_1$ and $L'_2$ selected by $r_1$ (as described above). The line passing through $r_n$, intersecting the $X$-axis, and forming a canonical angle in the first quadrant defines the $Y$-axis. Finally, the intersection between the two axes defines the origin of the system denoted as $O$. In this reference system, the fist quadrant is denoted as $Q^+$, while the third quadrant is denoted as $Q^-$. 
}
\end{remark}
Now, concerning the current task, it remains to be defined the correct positioning of both guards. The target of $r_1$ ($r_n$, resp.) is on the $X$-axis ($Y$-axis, resp.) so that the distance from the origin ensures that the configuration remains asymmetric during the subsequent $\PPF$ task. To define such a distance, robots compute the following:
\begin{itemize}
\item
Let $R^*$ be the (possibly empty) subset of robots of $R''$ lying in $Q^-$, $\PP$ be the parallelogram $\bp(R^*)$ with the constraint that it must use the directions parallel to the $X$- and $Y$-axes, and let $\Delta = \max \{ w(\PP), w( \mbp(F))\}$.\footnote{This definition of $\Delta$ is given with respect to $R''$ instead of $R'$ so that it can be also used in the subsequent tasks $T_3$, $T_4$, and $T_5$.}
\end{itemize}
According to $\Delta$, the target of $r_1$ corresponds to the closest vertex on the $X$-axis which is at distance at least $3\Delta$ from the origin. The trajectory followed is represented by any shortest path to the target. 
Note that at the and of task $T_2$, variable $\vguno$ still holds, but the movement of $r_1$ makes true the following additional variables:
\begin{itemize}
\item 
     \emph{$\vhpdue$ = all the robots in $R''$ are in the same half-plane 
     with respect to the line induced by $\vguno$.}
\item
     \emph{$\vdruno = d(r_1,O)\ge 3\Delta$.}
\end{itemize}

\begin{figure}[t]
   \graphicspath{{fig/}}
   \centering
   \def\svgwidth{\columnwidth}
   {\large\scalebox{0.7}{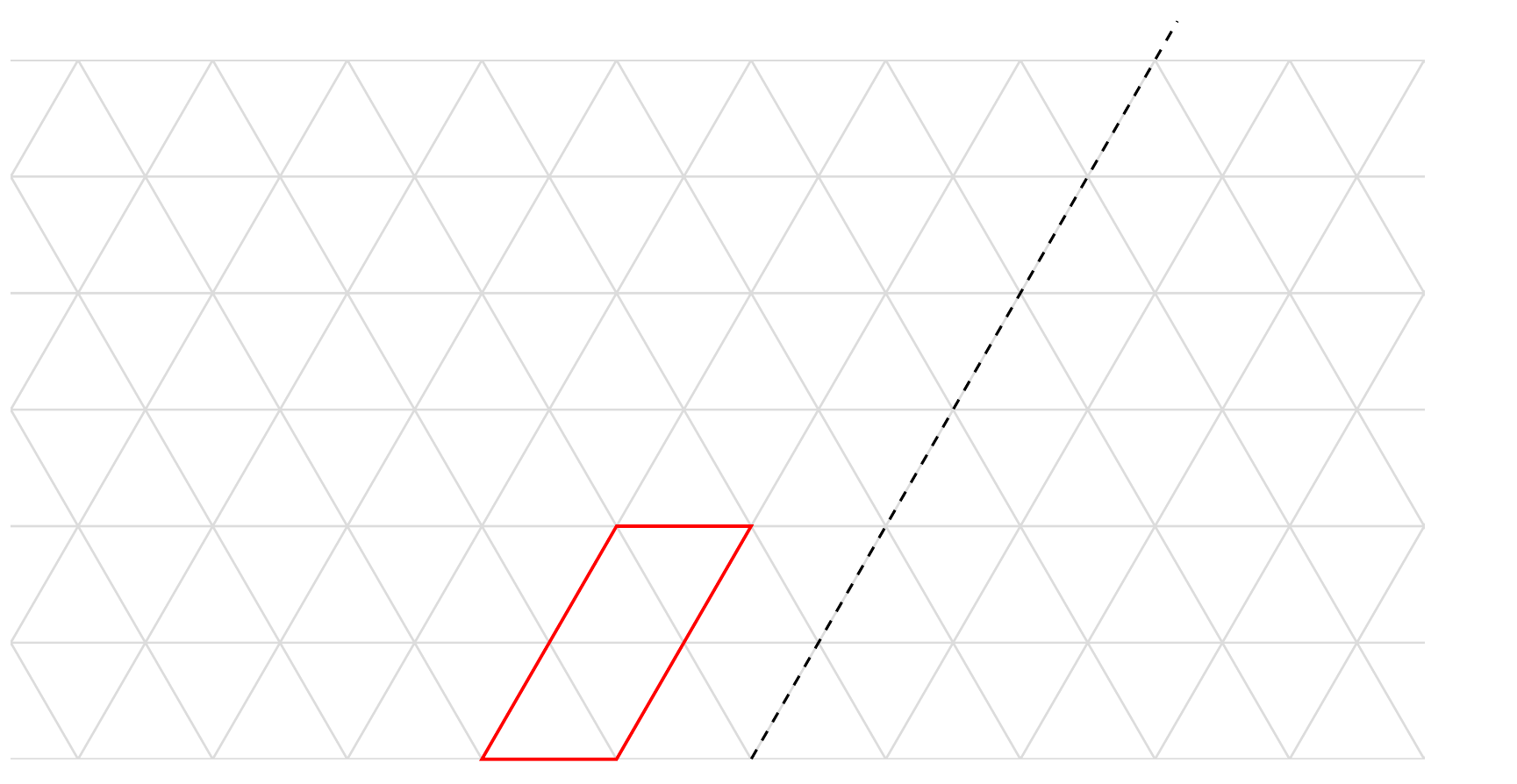}}
  \caption{%
Visualization of sub-problem $\RS_{2}$ concerning the placing of guard $r_n = r_6$. Notice the embedding $F_e$ in $Q^+$  (cf Definition~\ref{def:embedding}) of the pattern represented in Fig.~\ref{fig:mbr} and the ordering of all robots in $R''$ according to the lexicographic order of the coordinates of the vertices in which they reside.
}
\label{fig:T2.1}
\end{figure}

%

\subsection{Task $T_3$}\label{ssec:T3}
The aim of this task is to locate $r_n$ to a destination easily recognizable in the next tasks, especially during the formation of the (sub-) pattern by robots in $R''$. As a pre-condition, in this task we assume true all the variables holding at the end of task $T_2$, namely $\vguno$, $\vhpdue$, and $\vdruno $.

According to the pre-condition, in this task robots can use Remark~\ref{rem:afterT2}, with the  difference that now the $X$-axis is directly defined as the direction induced by $\vguno$. By using that remark, robots can identify both guards and re-compute the common reference system. At this point, $r_n$ performs the task by simply moving along the $Y$-axis (cf Fig.~\ref{fig:T2.1}) toward the closest vertex $(0,y)$ such that the following variable holds:
\begin{itemize}
\item
     \emph{ $\vgenne$ = $r_n$ is at a vertex $(0,y)$, with $2\Delta \leq y < d(r_1,O)$. }
\end{itemize}
The following additional remark states how robots can re-compute the common reference system in the subsequent tasks.
\begin{remark}\label{remark:RS2}
\textit{
At the end of Task $T_3$, i.e. when both the guards are suitably placed, each robot can recognize the formed reference system: the two guards can be detected according to function $D()$, since $r_1$ and $r_n$ have the largest and second largest value of $D()$, respectively; if $\vguno$ holds, the  induced line defines the $X$-axis directed from $r_1$ to all the other robots; the $Y$-axis is the line passing through $r_n$, intersecting the $X$-axis, directed from the intersection toward $r_n$, and forming a canonical angle in the first quadrant. Finally, the fact that the two guards are correctly positioned according to the strategy can be verified according to $\Delta$, since $Q^-$ (which contains all robots in $R''$) is identified. 
}
\end{remark}


\subsection{Task $T_4$}\label{ssec:T4}
This task concerns the so called ``Partial Pattern Formation'', that is forming part of the input pattern $F$ by using robots in $R''=R\setminus \{r_1,r_n\}$ only. To this aim, all the $n-2$ robots in $R''$ initially located in the quadrant $Q^-$ will be moved in the quadrant $Q^+$. In our strategy, it is addressed only once $\RS$ is solved, that is when the two guards $r_1$ and $r_n$ are suitably placed. More formally, as a pre-condition for performing this task our algorithm requires that $\vguno$, $\vgenne$, and $\vdruno$ are all true. 

It is clear that this task can be accomplished by robots in $R''$ only if they know the common reference system: this can be obtained as described in Remark~\ref{remark:RS2}. 
%
Concerning the partial pattern to be formed, all robots must agree on the positions they have to reach in $Q^+$; this problem is solved by performing an embedding of $F$ into $\GT$ according to the following definition. 
\begin{definition}[Embedding of the pattern]\label{def:embedding}
$F_e$ is the set of vertices in $Q^+$ obtained by translating $F$ so that the following conditions hold: (1) the leading corner of $\mbp(F)$ is mapped onto the origin $O$, and (2) the leading direction of $\mbp(F)$ coincides with the positive direction of the $Y$-axis.
\end{definition}
An example of $F_e$ is shown in Fig.~\ref{fig:T2.1}. Once the robots agree on $F_e$, the main difficulties in this task are to preserve the reference system (induced by guards $r_1$ and $r_n$) and to avoid undesired collisions during the movements. To avoid collisions, robots are moved one at a time according to a schedule induced by the following definitions:
\begin{itemize}
\item
Vertices in $F_e$ are ordered according to the lexicographic order of their coordinates expressed according to the formed $X$- and $Y$-axes. Hence, from now on we denote $F_e$ as the multiset\footnote{Recall that $F$ may contain multiplicities.} 
$\{f_1,f_2,\ldots, f_n\}$, where $i \le j$ if and only if the coordinates of $f_i$ precede those of $f_j$. Similarly for robots in $R''$: they are ordered according to the lexicographic order of the coordinates of the vertices in which they reside and $R''=\{r_2,r_3,\ldots,r_{n-1}\}$.
\item
Vertices $f_1$ and $f_n$ are not used during the resolution of $\PPF$ since they are considered as the final \emph{targets} for the guards. In particular, in the last part of the resolution algorithm, $r_1$ will be moved in $f_1$ and $r_n$ will be moved in $f_n$.
\item
A vertex $f_i\in F_e$, $2\le i\le n-1$, is called the \emph{largest unmatched target} if it is unoccupied whereas $f_{j}$ is occupied for each $i <j < n$.
\item
A robot $r_i\in R''$, $2\le i\le n-1$, is called \emph{largest unmatched robot} if $f_i$ is the largest unmatched target.
\end{itemize}
Algorithm $\Aform$ moves robots in $R''$ in order, moving each time the largest unmatched one toward the largest unmatched target in $F_e$. The trajectory of a moving robot is given by any shortest path leading to its target.

During the task, all the unmatched robots must result to be correctly positioned with this strategy. This is controlled by the following variable:
\begin{itemize}
\item
     \emph{ $\vrpeffe = $ the largest unmatched robot $r_i$ is on a shortest path  
                        from any vertex in $Q^-$ to $f_i$, and each robot $r_j$,  
                        $j<i$, is in $Q^-$. }
\end{itemize}
Notice that at the end of this task, variable $\vguno$ still holds. Finally, also the following additional variables hold:
\begin{itemize}
\item
     \emph{ $\vhpuno = $ all the robots in $R'$ are in the same 
            half-plane with respect to the line induced by $\vguno$. }
\item
     \emph{ $\vpfenne =$ there exists an embedding of $F$ such that all robots 
            in $R''$ are similar to  $F \setminus \{f_1,f_n\}$}
\end{itemize}
\begin{figure}[t]
   \graphicspath{{fig/}}
   \centering
   \def\svgwidth{\columnwidth}
   {\large\scalebox{0.65}{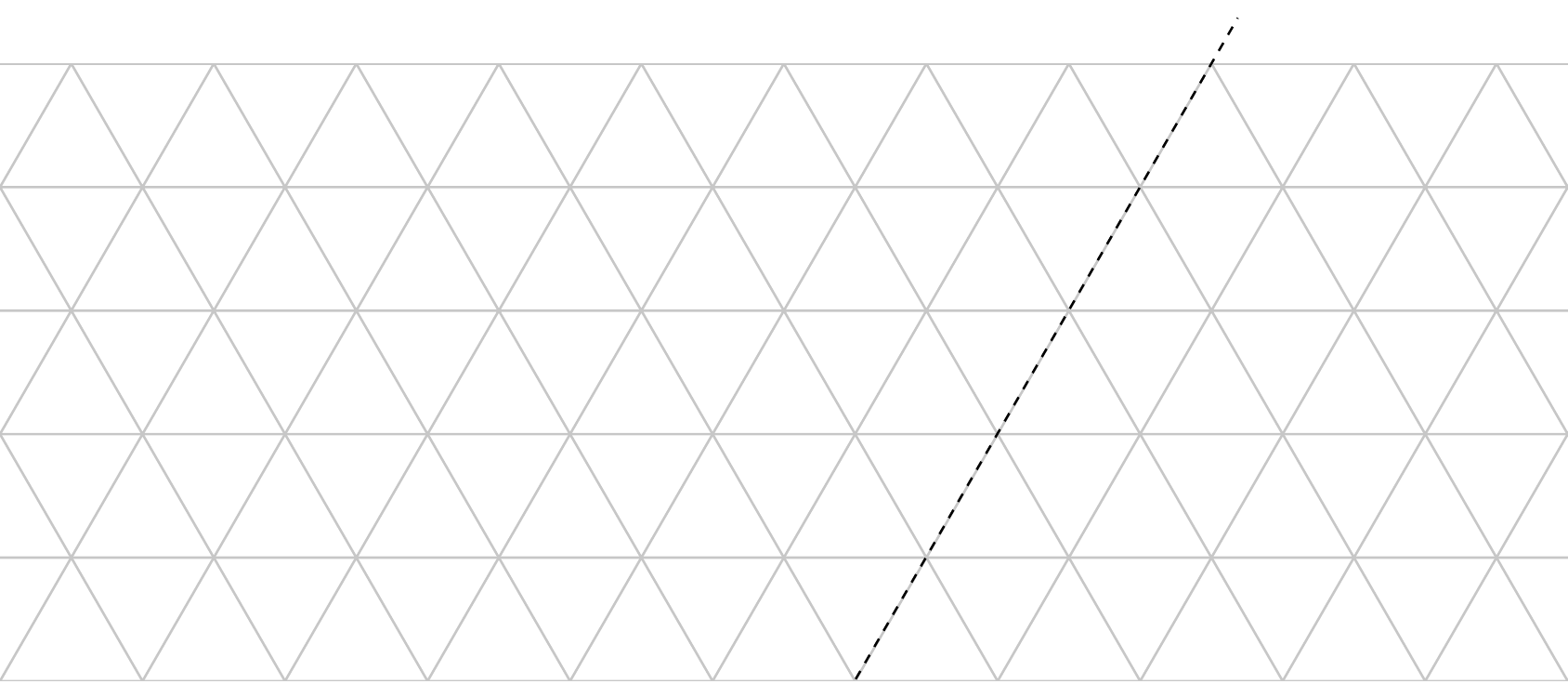}}
  \caption{%
Visualization of the configuration obtained at the end of task $T_4$ (cf Fig.\ref{fig:T2.1}). Gray (black, white, resp.) circles represent unmatched robots (matched robots, unmatched targets, resp.), while integers close to  matched robots refer to multiplicities. 
}
\label{fig:T3.1}
\end{figure}

%
\subsection{Task $T_5$}\label{ssec:T5}
This task is the first associated with ``Finalization'' sub-problem. In particular, it concerns the movement of $r_n$ toward $f_n$. According to the strategy, our algorithm assumes that all the variables made true by the previous task $T_4$ are true, namely $\vguno$, $\vhpuno$, $\vdruno$, and  $\vpfenne$.

Robot $r_n$ moves from the $Y$-axis straightly along the canonical direction parallel to the $X$-axis until a vertex with the same $X$-coordinate of $f_n$ is reached, and then it directly proceeds toward the target (cf. Fig.~\ref{fig:T3.1}). The movement of $r_n$ can take many $\LCM$ cycles and hence it is required a new variable to check the correct positioning of $r_n$:
\begin{itemize}
\item
     \emph{ $\vhrenne = $ point $f_n=(x,y)$ and robot $r_n=(x',y')$, with $x'\leq x$ and $y'\geq y$. }
\end{itemize}
Clearly, the check of variable $\vhrenne$ requires the reference system. However, this cannot be evaluated as it has been done in the previous tasks, since $r_n$ is currently moving (i.e., guards are not suitably placed anymore). Anyway, since $\vpfenne$ holds, the reference system can be deduced from the embedding. In particular:

\begin{remark}\label{remark:T5}
\emph{
During Task $T_5$, each robot can recognize the formed reference system: $r_1$ can be detected according to function $D()$ since it has the largest value of $D()$; if $\vguno$ and $\vhpuno$ hold, the induced line defines the $X$-axis directed from $r_1$ to all the other robots; from $\mbp(F)$ it is possible to check whether its leading corner placed on a vertex $v$ on the $X$-axis makes $f_2$, $\ldots$, $f_{n-1}$  matched, hence defining $r_n$; the $Y$-axis is assumed as the canonical direction passing through $v$ and forming a canonical angle in the first quadrant which contains all robots in $R'$; finally, knowing $r_1$ and $r_n$ and $Q^-$ it is possible to compute $\Delta$ and hence check whether $\vdruno$ holds.}
\end{remark}

Once $r_n$ reaches $f_n$, variables $\vguno$, $\vdruno$ still hold and a new variable is made true:
\begin{itemize}
\item
     \emph{ $\vpfuno = $ pattern $F \setminus \{f_1\}$ formed.}
\end{itemize}
\subsection{Tasks $T_6$ and $T_7$}\label{ssec:T6-T7}
When $T_6$ starts as a consequence of the termination of $T_5$, variables  $\vguno$, $\vdruno$, and $\vpfuno$ are all true. This means that $n-1$ robots reached their targets apart from one robot which is far enough from the others in order to induce just one direction toward the remaining robots. From now on such a robot is referred to as $r_1$.  Since $T_6$ and $T_7$ refer to the ``Finalization'' sub-problem, $r_1$ must be moved toward its target in order to finalize the pattern $F$. This means that during such tasks both guards are no longer correctly positioned and hence the common reference system is no longer available. In particular, the origin $O$ of the system is not defined and hence $\vdruno$ cannot be evaluated. Anyway, we will see that the algorithm will move $r_1$ so that the following variable remains valid during $T_6$:
\begin{itemize}
\item
     \emph{ $\vdrunop =$ the distance between $r_1$ and the other robots guarantees 
                         that $d(r_1,\mbp(F))\ge 3w(\mbp(F))$. }            
\end{itemize}
In particular, task $T_6$ is meant to move $r_1$ toward a target vertex $t$ so that the following properties hold: (1) $\vdrunop$ remains true, and (2) in the subsequent task $T_7$, starting from $t$, robot $r_1$ can reach its final destination $f_1$ by moving straightly along one canonical direction. 

\smallskip
Even if in both $T_6$ and $T_7$ the common reference system is no longer available, robots can take advantage of the existing positions of $r_1$ and of the other robots to correctly finalize the pattern. In particular, when $T_6$ starts the variables $\vguno$, $\vdruno'$, and $\vpfuno$ are all true and, accordingly, robots can compute the following data:

\begin{itemize}
\item
let $U$ be the direction induced by variable $\vguno$. Let $L_1$ and $L_2$ be the lines parallel to $U$, closest to each other, and enclosing $R’=R\setminus \{r_1\}$;
\item
consider the smallest parallelogram $P_1$ ($P_2$, resp.) such that: it encloses the whole set $R$, it guarantees $h(P_1)=h(\mbp(F))$, it has the longest side on $L_1$ ($L_2$, resp.), and it determines a corner $O_1$ ($O_2$, resp.) at the intersection vertex with the shortest side passing through $r_1$ that admits a canonical angle; 
\item
compute $sP_1$ ($sP_2$, resp.) as the sequence of integers associated with $O_1$ ($O_2$ resp.) such that $r_1$ is met as the first robot.
\end{itemize}

\begin{figure}[t]
   \graphicspath{{fig/}}
   \centering
   \def\svgwidth{\columnwidth}
   {\large\scalebox{0.75}{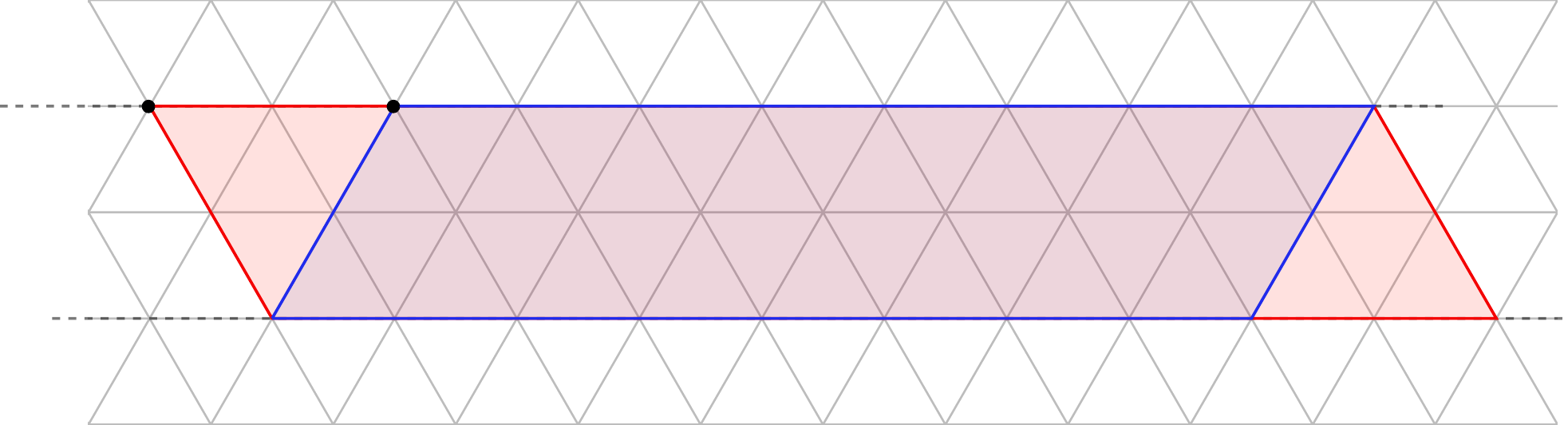}}
  \caption{%
Visualization of the configuration obtained at the end of task $T_5$ (cf Fig.\ref{fig:T3.1}). Gray (black, white, resp.) circles represent unmatched robots (matched robots, unmatched targets, resp.), while integers close to  matched robots refer to multiplicities. 
}
\label{fig:T6}
\end{figure}

%
As an example, $P_1$ and $P_2$ correspond to the red and blue parallelograms represented in Fig.~\ref{fig:T6}, respectively. According to such data, robots can verify whether the current configuration is coherent with task $T_6$ of our algorithm $\Aform$ by performing the following check:
\begin{itemize}
\item
at least one parallelogram between $P_1$ and $P_2$ must be coherent with the $n-1$ elements of $F$ already matched. This means that the last values of $sP_1$ (or $sP_2$) must coincide with the sequence $\ell(\mbp(F))$ except for just one value corresponding to $f_1$ (the vertex to be matched by $r_1$). Notice that it is possible that this happens for both $sP_1$ and $sP_2$ when there is a reflection axis for $F\setminus \{f_1\}$ parallel to the direction $U$.
\end{itemize}
If we denote by $\LSF$ the sequence of integers obtained from $\ell(\mbp(F))$ by decreasing by one the first non-zero element,%
\footnote{Basically $\LSF$ denotes the sequence $\ell(\mbp(F))$ by ignoring $f_1$.}
and by $d_f$ the position in $\ell(\mbp(F))$ of such an element, the above check can be done by better formalizing variable $\vpfuno$:
\begin{itemize}
\item
     \emph{ $\vpfuno =$ there exists $s\in \{sP_1,sP_2\}$ such that $s = s’ + \LSF$, 
                        for some $s’$ made of only $0$'s and just one $1$ in position
                        $d_{r_1}$ and $d_{r_1} < d_f$. }            
\end{itemize}
Referring to the example shown in Fig.~\ref{fig:T6}, variable $\vpfuno$ is made true by the sequence obtained from the vertex $O_2$. In fact, $sP_2 = (1,0,0,~0^{21},~1,3,1)$, $\ell(\mbp(F)) = (0,0,1,~0,0,0,~1,3,1)$, $d_{r_1}=1$ and $d_f=3$. 

During $T_6$, robot $r_1$ is moved along the shortest side of the parallelogram associated with the string $s$ (cf. the definition of $\vpfuno$) so as to increase position $d_{r_1}$. This movement stops when $d_{r_1}=d_f$ applies and at that time task $T_7$ starts. It is easy to observe that when $T_7$ starts as a consequence of the termination of $T_6$, the following variable holds:
 \begin{itemize}
\item
     \emph{ $\vqfuno =$ sequence $\ell(\mbp(R))$ guarantees that 
                        $\ell(\mbp(R) )  = \ell’ + \LSF$, for some $\ell’$ 
                        made of only $0$'s and just one $1$ in position $d_{r_1}$ 
                        and $d_{r_1} = d_f$. }            
\end{itemize} 
Basically, when $\vqfuno$ holds, all robots know that $r_1$ can complete the pattern by going straight toward its target. The main difficulty in this task is to cope with possible symmetries formed during the last movement of $r_1$. Notice that the configurations produced by the algorithm are always asymmetric (this holds for the initial configuration, and the position of $r_1$ guarantees that this property is maintained in $T_1,\ldots,T_6$). In principle, when $r_1$ is very close to its target, possible symmetries may imply that more than one robot can detect itself as the moving robot $r_1$, and also that $r_1$ can detect more than one vertex as the target $f_1$. However, we are able to show that in any case the formed configuration has at most a reflection axis with $r_1$ on that axis. Summarizing, we will show that all these cases do not prevent the algorithm to complete the pattern formation.

%

\section{Formalization and correctness}\label{sec:formalization}
As introduced in the previous section, the proposed algorithm $\Aform$ is based on a strategy that decomposes the $\apf$ problem into tasks $T_1,T_2,\ldots,T_8$. 
According to the \LCM model, during the \Compute phase each robot must be able to recognize the task to be performed just according to the configuration perceived during the \Look phase and the input pattern $F$. This recognition can be performed by providing $\Aform$ with a \emph{predicate} $P_i$ for each task $T_i$. Given the perceived configuration and the input pattern $F$, the predicate $P_i$ that results to be true reveals to robots that the corresponding task $T_i$ is the task to be performed. This approach requires that the designed predicates must guarantee some properties:
\begin{description}
\item[$\Prop_1$:]
given the pattern $F$, each $P_i$ must be computable on the configuration $C$ perceived in each \Look phase; 
\item[$\Prop_2$:]
$P_i \wedge P_j =\false$, for each $i\neq j$; this property allows robots to exactly recognize the task to be performed;
\item[$\Prop_3$:]
given the pattern $F$, for each possible perceived configuration $C$ there must exists a predicate $P_i$ evaluated true. 
\end{description}
If we guarantee that all these properties hold, then $\Aform$ can be used in the \Compute phase as follows: 
\begin{quote}
    \emph{-- if a robot $r$ executing algorithm $\Aform$ detects that 
          predicate $P_i$ holds, then $r$ simply performs a move $m_i$
          associated with task $P_i$.}
\end{quote}
Concerning how to define the predicates, we have already remarked in the previous section that each task can be accomplished only when some \emph{pre-conditions} are fulfilled. Hence, to define the predicates in general we need:
\begin{itemize}
\item
\emph{basic variables} that capture metric/topological/numerical/ordinal aspects of the input configuration which are relevant for the used strategy and that can be evaluated by each robot on the basis of its view;
\item
\emph{composed variables} that express the pre-conditions of each task $T_i$.
\end{itemize}
All the needed basic variables useful for $\Aform$ have been already defined in Sections~\ref{ssec:T1}--\ref{ssec:T6-T7}. If we assume that $\pre_i$ is the composed variable that represents the pre-conditions of $P_i$, for each $1\le i\le 8$, then predicate $P_i$ can be defined as follow:
\begin{equation}\label{eq:predicates}
	P_i = \pre_i \wedge \neg ( \pre_{i+1} \vee \pre_{i+2} \vee \ldots \vee \pre_8 )
\end{equation} 
This definition leads to the following remark:
\begin{remark}\label{rem:prop2}
\textit{
Predicates $P_i$ fulfill Property $\Prop_2$. This is directly implied by Eq.~\ref{eq:predicates} 
} 
\end{remark}

\begin{table*}[t]
\bgroup
\def\arraystretch{1.4}
\setlength{\tabcolsep}{6pt}
\begin{center}
  \begin{tabular}{ | c | p{0.53 \textwidth} | p{0.35 \textwidth} | }
    \hline
    \textit{var}  &  \textit{definition}  
                  &\textit{rationale} 
                  \\ \hline \hline
    $\vguno$   & $\exists$ a unique line parallel to a canonical direction passing 
                 through $r_1$ and each $\bp(R')$
               & \textit{guard $r_1$ is partially placed}
               \\ \hline
    $\vgenne$  & $r_n$ is at a vertex $(0,y)$, with $2\Delta \leq y < d(r_1,O)$
               & \textit{guard $r_n$ is placed} 
               \\ \hline
    $\vdruno$  & $d(r_1,O)\ge 3\Delta$
               & \textit{guard $r_1$ is at a desired distance from the origin }
               \\ \hline
   $\vdrunop$ & $d(r_1,\mbp(F))\ge 3w(\mbp(F))$
               & \textit{robot $r_1$ is at a desired distance from the pattern } 
               \\ \hline
    $\vhpuno$  & let $L$ be the line induced by $\vguno$; all robots in $R'$ are in the 
                 same half-plane with respect to $L$
               & \textit{all robots in $R'$ are in the same half-plane with respect 
                 to the line induced by $\vguno$ } 
               \\ \hline
    $\vhpdue$  & let $L$ be the line induced by $\vguno$; all robots in $R''$ are in the 
                 same half-plane with respect to $L$
               & \textit{all robots in $R''$ are in the same half-plane with respect 
                 to the line induced by $\vguno$ } 
               \\ \hline
    $\vhrenne$ & $f_n=(x,y)$ and $r_n=(x',y')$, with $x'\leq x$ and $y'\geq y$
               & \textit{guard $r_n$ is on the right path to its target}
               \\ \hline
     $\vrpeffe$ & the largest unmatched robot $r_i$ is on a shortest path from 
                 any vertex in $Q^-$ to $f_i$, and each robot $r_j$, $j<i$, 
                 is in $Q^-$
               & \textit{all the unmatched robots are correctly positioned with respect 
                 to $\PPF$}
               \\ \hline
    $\vpfuno$  & $\exists$ $s\in \{sP_1,sP_2\}$: $s = s’ + \LSF$, for some $s’$ 
                 made of only $0$'s and just one $1$ in position $d_{r_1}$ and 
                 $d_{r_1} < d_f$
               & \textit{pattern $F \setminus \{f_1\}$ formed}
                \\ \hline
    $\vpfenne$ &  $\exists$ embedding of $F$ such that all robots in $R''$ are similar 
                  to  $F \setminus \{f_1,f_n\}$
               & \textit{pattern $F \setminus \{f_1,f_n\}$ formed} 
               \\ \hline
    $\vqfuno$  &  $\ell(\mbp(R) )  = \ell’ + \LSF$, for some $\ell’$ made of 
                  only $0$'s and just one $1$ in position $d_{r_1}$ and $d_{r_1} = d_f$
               & \textit{guard $r_1$ can complete the pattern by going straight toward 
                  its target} 
               \\ \hline
    $\vs$      & $R$ and $F$ are similar
               & \textit{pattern $F$ formed } 
               \\ \hline
  \end{tabular}
\end{center}
\egroup
\caption{ The basic Boolean variables used to define all the tasks' preconditions. }
\label{tab:basic-variables}
\end{table*}
%
Before addressing the remaining properties $\Prop_1$ and $\Prop_3$, we formalize all the basic variables, the pre-conditions for each task, and, as a consequence, all the predicates. 
All the necessary basic variables are summarized in Table~\ref{tab:basic-variables}.
Table~\ref{tab:tasks} reports all the ingredients determined by the proposed algorithm: the first two (general) columns recall the hierarchical decomposition described in the previous section, the third column associates tasks names to sub-problems, and the fourth column defines precondition $\pre_i$ for each task $T_i$. These preconditions must be considered according to Equation~\ref{eq:predicates}. The fifth column of Table~\ref{tab:tasks} contains the name of the move used in each task (we simply denote as $m_i$ the move used in task $T_i$), and the specification of each move is provided in Table~\ref{tab:moves}. Unless differently specified, each trajectory defined in the moves must be intended as any shortest path to the target. 

Table~\ref{tab:tasks} leads to the following remark:
\begin{remark}\label{rem:prop3}
\textit{
Algorithm $\Aform$ fulfills Property $\Prop_3$. This is implied by pre-condition $\pre_1$ and predicates $P_i$.
}
\end{remark}
%


\begin{table*}[t]
\bgroup
\def\arraystretch{1.4}
\setlength{\tabcolsep}{6pt}
\begin{center}
  \begin{tabular}{ | c | l | l | l | r | l | }
    \hline
\textit{problem} & \multicolumn{2}{l|}{\textit{sub-problem}} & \textit{task} &  
       \textit{precondition} & \textit{move}\\ \hline \hline

\multirow{8}{*}{\raggedleft $\apf$ } & \multirow{4}{*}{\raggedleft $\RS$ }  & $\RS_{1a}$  & \textit{ $T_1$} & \textit{true}  &  $m_1$ \\ \cline{3-6}

& & $\RS_{1b}$ & \textit{ $T_2$} & $\predue$ & $m_2$ \\ \cline{3-6}

& & $\RS_{2}$ & \textit{ $T_3$} & $\pretre$ & $m_3$ \\ \cline{2-6}
                          
& \multicolumn{2}{l|}{ $\PPF$ }  &  \textit{ $T_4$} & $\prequattro$ &  $m_4$ \\ \cline{2-6}

& \multirow{4}{*}{\raggedleft $\Fin$ } & $\Fin_{1}$ & \textit{ $T_5$} &  $\precinque$ & $m_5$ \\ \cline{3-6}

& & $\Fin_{2}$ & \textit{ $T_6$} & $\presei$ & $m_6$ \\ \cline{3-6}

& & $\Fin_{3}$ & \textit{ $T_7$} & $\presette$  & $m_7$ \\ \cline{2-6}

& \multicolumn{2}{l|}{ $\Term$ } & \textit{ $T_8$} & $\vs$  & $\nil$ \\ \hline 

\hline
  \end{tabular}
\end{center}
\egroup
\caption{ Algorithm $\Aform$ for $\apf$. }
\label{tab:tasks}
\end{table*}

\begin{table*}[t]
\bgroup
\def\arraystretch{1.3}
\setlength{\tabcolsep}{5pt}
\begin{center}
  \begin{tabular}{ | c | p{0.87 \textwidth} | }
    \hline
    \textit{move}  &  \textit{definition} \\ \hline \hline
      
    $m_1$ & $r_1$ moves toward the closest vertex so as $\vguno$ holds\\ \hline
    $m_2$ & $r_1$ moves on the closest vertex on $X$-axis at distance at least 
            $3\Delta$ from the origin \\ \hline
    $m_3$ & $r_n$ moves toward vertex $(0,y)$, with $2\Delta \leq y < d(r_1,O)$ \\ \hline
    $m_4$ & the \emph{largest unmatched robot} in $R''$ moves toward the 
            \emph{largest unmatched target} in $F_e$ \\ \hline
    $m_5$ & $r_n$ first moves along a path that maintains fixed the $y$ 
            coordinate until its $x$ coordinate coincides with that of $f_n$ - then, 
            it moves toward $f_n$ \\ \hline
    $m_6$ & if both $sP_1$ and $sP_2$ satisfy $\vpfuno$ then let $s$ be the
            lexicographically minimum one. Then, robot $r_1$ moves along the 
            shortest side of the parallelogram associated with $s$ so as to 
            increase $d_{r_1}$ \\ \hline
    $m_7$ & robot $r_1$ in position $d_{r_1}$ moves toward $f_1$ \\ \hline
  \end{tabular}
\end{center}
\egroup
\caption{ Moves associated with tasks. It is assumed that each robot not involved in $m_i$ perform the $\nil$ movement.}
\label{tab:moves}
\end{table*}


\subsection{On computing the predicates: property $\Prop_1$}\label{ssec:compute}
In this section, we show how the proposed algorithm $\Aform$ can compute each predicate $P_i$, that is, we show that $\Aform$ guarantees that property $\Prop_1$ holds. 

According to the definition of $P_i$ given in Eq.~\ref{eq:predicates}, in the \Compute phase, each robot evaluates -- with respect to the perceived configuration $C$ and the pattern $F$ to be formed -- the predicates starting from $P_8$ and proceeding in the reverse order with the others until a true pre-condition is found. In case all pre-conditions $\pre_8,\pre_7,\ldots,\pre_2$ are evaluated false, then task $P_1$ is performed.

Evaluating $\pre_8$ is just a matter of testing whether $C$ and $F$ are similar. Concerning the evaluation of $\pre_7$, robots need to compute just variable $\vqfuno$, which in turn depends only on $\mbp(R)$ and $\mbp(F)$. Pre-condition $\pre_6$ implies to compute $\vguno$,  $\vdrunop$, and $\vpfuno$: the first needs $r_1$, which can be identified according to function $D()$ since in all tasks $T_1,\ldots,T_6$ this guard corresponds to the robot $r$ such that $D(r)$ is maximum; the second just uses $r_1$ and $\mbp(F)$; the third is computable as shown in Section~\ref{ssec:T6-T7} starting from the direction $U$ induced by variable $\vguno$, and by using $U$ and $\mbp(F)$ to determine the sequences $sP_1$ and $sP_2$, and the positions $d_{r_1}$ and $d_f$.

Pre-condition $\pre_5=\precinque$ can be evaluated as follows: $\vguno$ can be evaluated once $r_1$ has been recognized thanks to function $D()$, and $\vhpuno$ can be detected once the direction induced by variable $\vguno$ is known. Now, as described in Remark~\ref{remark:T5}, by using $\vguno$ and $\vhpuno$ the common reference system can be established by each robot, and from it both $\vdruno$ and $\vhrenne$ can be evaluated. Finally, variable $\vpfenne$ can be checked by using a combinatorial approach.

Concerning pre-condition $\pre_4=\prequattro$, all its variables except $\vrpeffe$ can be evaluated according to Remark~\ref{remark:RS2}, while $\vrpeffe$ can be easily checked according to the definitions introduced in Section~\ref{ssec:T4}.
Pre-condition $\pre_3=\pretre$ can be evaluated as follows: $\vguno$ can be computed again according to $D()$, and then Remark~\ref{rem:afterT2} can be used to establish the common reference system. From this reference system, both $\vhpdue$ and $\vdruno$ can be evaluated.

Finally, for checking $\pre_2=\predue$ it is enough to use $D()$ to identify robot $r_1$.


\subsection{Correctness}\label{ssec:correctness}
In this section, we formally prove that algorithm $\Aform$ solves the $\apf$ problem on the tessellation graph $\GT$. To this end, let $\I_\A$ be the set containing all the configurations taken as input or generated by $\Aform$.

According to properties $\Prop2$ and $\Prop 3$, all tasks’ predicates $P_1$, $P_2$, $\ldots$, $P_8$ used by the algorithm have been defined so as to make a partition of $\I_\A$. Together with $\Prop_1$, for each possible configuration provided to $\Aform$, the algorithm can evaluate each predicate ad exactly determine the task to be performed. 

The correctness can be assessed by proving that all the following properties hold:

\begin{description}
\item[$H_1$:]
$\Aform$ does not generate multiplicities nor symmetric configurations (unless  $F$ is formed or its formation is not prevented);
\item[$H_2$:]
from any class $T_i$, $1\leq i\leq 8$, no class $T_j$ with $j<i$ can be reached.


\item[$H_3$:]
from any class $T_i$, $1\leq i\leq 7$, another class $T_j$ with $j>i$ is always reached within a finite number of \LCM cycles.
\end{description}

Since properties $H_1$, $H_2$ and $H_3$ must be proved for each transition/move, then in the following we provide a specific lemma for each task. 

\begin{lemma}\label{lem:T1}
From an initial configuration $C$ belonging to class $T_1\cap \I_\A$ the algorithm $\Aform$ eventually leads to a configuration $C'$ in a class $T_i$, $i>1$.
\end{lemma}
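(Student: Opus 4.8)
The plan is to reduce the statement to a single‑robot reachability problem governed by a simple potential function, and then to discharge the invariants that keep the distinguished robot stably identified throughout the oblivious, asynchronous execution.

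First I would fix notation and reduce. Since $C\in\I_\A$ is asymmetric, $\mbp(R)$ is unique, so the rule ``$r_1=$ robot of maximum $D$, ties broken by the smallest position in $\ell(\mbp(R))$'' designates one specific physical robot; call it $r_1$ and set $R'=R\setminus\{r_1\}$. By Table~\ref{tab:moves}, move $m_1$ is executed only by $r_1$ (every other robot performs $\nil$), and because the configuration is asymmetric there is no pair of equivalent robots, hence no pending robot other than possibly $r_1$ itself. Consequently, along the whole execution of $T_1$ the set $R'$ stays frozen at its initial positions and the only evolving object is the position of $r_1$; moreover, since $r_1$ is the unique mover, the snapshot it reads in each \Look phase faithfully describes the current configuration, so asynchrony introduces no stale information here.

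Second, I would set up termination. Because $R'$ is frozen, the set $\Gamma=\{v : \vguno \text{ holds when } r_1 \text{ is placed at } v\}$ depends only on $R'$ and is therefore fixed during $T_1$. The first key step is the geometric claim $\Gamma\neq\emptyset$: exploiting the structure of the three bounding parallelograms of $R'$ in $\GT$, one exhibits a canonical direction along which placing $r_1$ on the grid line supporting an extreme side of $\bp(R')$ yields exactly one canonical line through $r_1$ that is a common side‑line of the relevant $\bp(R')$'s, so the uniqueness demanded by $\vguno$ holds; such vertices lie at finite distance. Granting $\Gamma\neq\emptyset$, move $m_1$ sends $r_1$ one edge along a shortest path toward the closest vertex $v^*\in\Gamma$, whence $d(r_1',\Gamma)\le d(r_1,v^*)-1=d(r_1,\Gamma)-1$ strictly decreases at each activation of $r_1$; by fairness $r_1$ reaches a vertex of $\Gamma$ after finitely many \LCM cycles. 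At that instant $\vguno$, i.e.\ $\pre_2$, holds, so $P_1=\pre_1\wedge\neg(\pre_2\vee\cdots\vee\pre_8)$ is false; by property $\Prop_3$ some $P_i$ with $i\ge 2$ is true, so the configuration lies in a class $T_i$ with $i>1$, as required.

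Third, I would discharge the invariants that legitimise the two steps above, namely a $T_1$ instance of $H_1$ together with the stability of the guard's identity. Here I would show that (i) no multiplicity is created, since the closest vertex of $\Gamma$ and the chosen shortest path can be taken through vertices unoccupied by $R'$ (intuitively $r_1$ moves toward the periphery, away from the cluster), and (ii) every intermediate configuration stays asymmetric: a nontrivial automorphism would preserve $D$ and hence fix the maximum‑$D$ robot, forcing a reflection axis through it or a rotation centred at it, which is incompatible with the asymmetry of the frozen $R'$ inherited from $C$ (or is excluded by the specific choice of the target vertex). Asymmetry in turn keeps the tie‑break via $\ell(\mbp(R))$ well defined and $r_1$ uniquely the maximum‑$D$ robot, so the same physical robot keeps executing $m_1$ at every activation, closing the loop with the reduction of the first step. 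The main obstacle is exactly this geometric core: proving $\Gamma\neq\emptyset$ together with the \emph{uniqueness} clause of $\vguno$, while simultaneously verifying that the reached vertex preserves asymmetry and the (consistently tie‑broken) maximality of $D(r_1)$; once these facts about bounding parallelograms in $\GT$ are in place, the potential‑decrease argument and the implication ``$\vguno$ true $\Rightarrow P_1$ false $\Rightarrow$ class $T_i,\ i>1$'' are routine.
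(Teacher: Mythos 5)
Your proposal follows essentially the same route as the paper's proof: a potential/distance argument for termination ($r_1$ strictly approaches the nearest vertex where $\vguno$ holds, so by fairness it arrives in finitely many \LCM cycles and $P_1$ becomes false), combined with the observation that $r_1$ is the unique mover, that $D(r_1)$ keeps it stably identified, and that it cannot collide with the frozen $R'$. One caveat: your invariant (ii), that \emph{every} intermediate configuration stays asymmetric, is both stronger than needed and not justified by ``the asymmetry of the frozen $R'$ inherited from $C$'' --- asymmetry of $C$ does not imply asymmetry of $R=R'\cup\{r_1\}$ after $r_1$ moves, nor of $R'$ itself, since deleting a robot can create symmetry. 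The paper instead concedes that $m_1$ may create a symmetric configuration but observes it can only be a reflection with $r_1$ on the axis (as $r_1$ is the unique robot of maximum $D$, every automorphism fixes it), which is harmless because $r_1$ remains uniquely identifiable; your argument goes through once this weaker invariant is substituted. Your explicit attention to the non-emptiness of the target set $\Gamma$ is a point the paper leaves implicit.
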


\begin{proof}
In this task, algorithm $\Aform$ selects a robot denoted as $r_1$ (the first guard) such that $D(r_1)$ is maximum and, in case of ties, the robot that has the minimum position in $\ell(\mbp(R))$.
\begin{description}
\item[$H_1$.] Since $D(r_1)$ is maximum, while $r_1$ moves away from the other robots, it cannot meet any other robot and $D(r_1)$ increases. Then, $r_1$ is repeatedly selected. Note that, if by $m_1$ a symmetric configuration is created then it must admit an axis of reflection where $r_1$ lies as this is the only robot defining $D(r_1)$.
\item[$H_2$.] Since as we are going to show the subsequent $H_3$ holds, we have that any other class can be reached.  
\item[$H_3$.] Robot $r_1$ always decreases the distance toward its target, within a finite number of \LCM cycles, unless other predicates become true, $\vguno$ becomes true and the configuration is not in $T_1$ anymore. \qed
\end{description}
\end{proof}

\begin{lemma}
From a configuration $C$ belonging to class $T_2\cap \I_\A$ the algorithm $\Aform$ eventually leads to a configuration $C'$ in a class $T_i$, $i>2$.
\end{lemma}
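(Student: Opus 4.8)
The plan is to verify the three correctness properties $H_1$, $H_2$, $H_3$ for task $T_2$, following the same template used in Lemma~\ref{lem:T1}. Recall that $T_2$ is entered with precondition $\pre_2 = \predue = \vguno$ (together with the negation of all later preconditions), so at the start of the task $\vguno$ holds, meaning there is a unique line $L$ parallel to a canonical direction passing through $r_1$ and each $\bp(R')$. The move $m_2$ relocates $r_1$ along the $X$-axis (the direction induced by $\vguno$) to the closest vertex at distance at least $3\Delta$ from the origin $O$. First I would establish that the reference system described in Remark~\ref{rem:afterT2} is well defined throughout the task: since $r_1$ is identified as the robot of maximum $D()$ and only $r_1$ moves, the second guard $r_n$, the origin $O$, and the quantity $\Delta$ (computed from $R^*\subseteq R''$ and $\mbp(F)$) all remain stable, because the set $R''$ does not change and the relative configuration of the non-moving robots is untouched.

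For $H_1$ I would argue that $m_2$ moves $r_1$ strictly away from the other robots along a canonical direction, so $r_1$ never collides with another robot and no multiplicity is created; moreover, as in Lemma~\ref{lem:T1}, any symmetry that could arise must be a reflection whose axis contains $r_1$, since $r_1$ is the unique robot realizing the maximum value of $D()$ and hence cannot be mapped to a different robot by any nontrivial automorphism. The key point is that the target distance $3\Delta \ge 3\,w(\mbp(F))$ and the placement of $r_n$ on the $Y$-axis at height in $[2\Delta, d(r_1,O))$ are precisely engineered so that the asymmetry is preserved during the subsequent $\PPF$ task; I would invoke the role of $\Delta$ as defined in Section~\ref{ssec:T2} (deliberately taken with respect to $R''$) to justify this. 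For $H_3$ I would observe that $r_1$ monotonically decreases its distance to the uniquely determined target vertex on the $X$-axis, and since each \Move reduces this distance by at least one edge along a shortest path, within a finite number of \LCM cycles $r_1$ reaches the target, at which point $\vdruno$ (namely $d(r_1,O)\ge 3\Delta$) becomes true together with $\vhpdue$; this makes $\pre_3 = \pretre = \vguno \wedge \vhpdue \wedge \vdruno$ hold, so the configuration leaves class $T_2$ and enters a class $T_i$ with $i>2$.

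The property $H_2$ then follows immediately from $H_3$ exactly as argued in Lemma~\ref{lem:T1}: once $H_3$ guarantees that some later class is reached, and since the predicates $P_1,\ldots,P_8$ partition $\I_\A$ via Eq.~\ref{eq:predicates}, it suffices to note that no move in $T_2$ can re-enable $\pre_1$ alone in a way that sends the configuration back to $T_1$; concretely, $\vguno$ continues to hold throughout the motion of $r_1$ along $L$, so the configuration never satisfies $P_1 = \pre_1 \wedge \neg(\pre_2 \vee \cdots \vee \pre_8)$ again.

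The main obstacle I anticipate is the $H_1$ argument concerning asymmetry during the transition: one must be careful that while $r_1$ travels toward its target, intermediate positions do not accidentally create a symmetric configuration that is \emph{not} of the benign ``reflection axis through $r_1$'' type. I would handle this by emphasizing that $\vguno$ remains invariant along the entire trajectory (the line $L$ is fixed, and moving $r_1$ along $L$ keeps it the unique such line), so $r_1$ stays distinguished as the maximum-$D()$ robot; combined with the fact that $r_1$ lies on $L$, any reflection symmetry must fix $r_1$ and have its axis aligned with $L$, which is exactly the admissible case already tolerated by $H_1$. The secondary subtlety is confirming that the selection of $r_n$ between $r_n'$ and $r_n'$ (via the comparison of $L_1'$ and $L_2'$) is deterministic and consistent with the reference system of Remark~\ref{rem:afterT2}, which I would note is guaranteed because the configuration is asymmetric and hence $\mbp(R)$, and consequently all derived quantities, are unique.
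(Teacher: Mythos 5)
Your reading of move $m_2$ is geometrically off, and this undermines the $H_1$ part of the argument. You treat the $X$-axis as the line $L$ induced by $\vguno$ and have $r_1$ sliding along $L$ "strictly away from the other robots"; but in task $T_2$ the $X$-axis is one of the two offset lines $L'_1$, $L'_2$ (parallel to $L$ and bounding $\bp(R'\setminus\{r'_n\})$, resp.\ $\bp(R'\setminus\{r''_n\})$), so $r_1$ must first traverse the band between $L_1$ and $L_2$ to reach the selected boundary line. The paper's entire symmetry argument for this task lives in that transversal motion: the only candidate reflection axis is the midline of the band (equidistant from $L_1$ and $L_2$ and parallel to them), and it is avoided precisely because $r_1$ moves toward the \emph{closest} of the two lines and hence never crosses or lands on the midline. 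Your proposal instead declares any reflection with axis through $r_1$ ``admissible, already tolerated by $H_1$,'' but such a symmetry is exactly what the algorithm must not create here: if $r_1$ sat on the midline, $L'_1$ and $L'_2$ would be tied under a genuine symmetry, $r'_n$ and $r''_n$ would be equivalent, and the deterministic, commonly agreed selection of the second guard --- and hence the reference system of Remark~\ref{rem:afterT2} --- would fail. So the claim that ``$\vguno$ remains invariant because the line $L$ is fixed and $r_1$ moves along it'' is both factually wrong about the trajectory and insufficient for $H_1$.

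The $H_2$ and $H_3$ parts are essentially aligned with the paper ($m_2$ does not affect $\vguno$, so $T_1$ cannot be re-entered; the distance to the target decreases at every cycle until $\vhpdue\wedge\vdruno$ holds and the configuration leaves $T_2$), although your justification of $H_2$ again leans on the incorrect ``motion along $L$'' picture. To repair the proof you need to (i) describe the actual trajectory of $m_2$ --- across the band toward the closest of $L'_1$, $L'_2$, ending at the vertex at distance at least $3\Delta$ from $O$ --- and (ii) replace the ``tolerated reflection through $r_1$'' claim with the observation that the unique potential axis is the band's midline, which $r_1$ never reaches because it heads to the nearer boundary.
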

\begin{proof}
Here $r_1$ lies between two parallel directions $L_1$ and $L_2$ enclosing each possible $\bp(R')$ and moves toward the closest one (toward any of them in case of ties) along a canonical direction. 
 \begin{description}
\item[$H_1$.]  Robot $r_1$, when moving toward its target, cannot meet any other robot, nor move on any axis of symmetry because the only possible one should be at the same distance from $L_1$ and $L_2$ and parallel to them. However, by moving to the closest $L_i$, $i\in \{1,2\}$, $r_1$ never crosses an axis. 

\item[$H_2$.] 
Move
$m_2$ does not affect predicate $\vguno$, that is no obtained configuration can belong to $T_1$. 

\item[$H_3$.] Robot $r_1$ always decreases the distance toward $L_i$, then within a finite number of \LCM cycles, unless other predicates become true, $\vhpdue\wedge\vdruno$ becomes true (cf. Section~\ref{ssec:T2}) and the configuration is not in $T_2$ anymore. \qed
\end{description}
\end{proof}

\begin{lemma}\label{lem:T3}
From a configuration $C$ belonging to class $T_3\cap \I_\A$ the algorithm $\Aform$ eventually leads to an asymmetric configuration $C'$ in $T_i$, $i>3$.
\end{lemma}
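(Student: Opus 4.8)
The plan is to reuse the three-property template of the preceding lemmas, verifying $H_1$, $H_2$, and $H_3$ for the single move $m_3$, which slides the second guard $r_n$ along the $Y$-axis toward the closest vertex $(0,y)$ with $2\Delta \le y < d(r_1,O)$. Throughout I would lean on the pre-condition $\pretre$: $\vguno$ gives that $r_1$ is the unique robot of maximum $D()$ and that the induced line is the $X$-axis, while $\vhpdue$ and $\vdruno$ place all robots of $R''$ in the half-plane below the $X$-axis (inside $Q^-$) and pin $r_1$ on the $X$-axis at distance at least $3\Delta$ from the origin $O$ (Remark~\ref{rem:afterT2}).

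For $H_3$ I would observe that each activation of $r_n$ strictly shortens its residual distance to the computed target on the $Y$-axis, so after finitely many \LCM cycles $r_n$ occupies a vertex making $\vgenne$ true. At that instant no robot of $R''$ has been matched and all of them still lie in $Q^-$, so the largest unmatched robot is the top-indexed one and lies, trivially, on a shortest path from $Q^-$ to its target; hence $\vrpeffe$ holds, $\prequattro$ is satisfied, and the configuration moves into $T_4$. For $H_2$ I would note that $m_3$ touches only $r_n$ and leaves $\vguno$, $\vhpdue$, $\vdruno$ true; thus $\pre_3$ persists over the whole move, which by Eq.~\ref{eq:predicates} forces $P_1$ and $P_2$ to be false and rules out any return to an earlier class.

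The substantive part, and the reason the statement stresses asymmetry, is $H_1$. Absence of multiplicities is immediate, since the positive-ordinate stretch of the $Y$-axis that $r_n$ crosses is free of robots ($R''\subseteq Q^-$, and $r_1$ lies on the $X$-axis). For the absence of symmetry I would argue by contradiction on any intermediate configuration admitting a nontrivial automorphism $\varphi$. Since $r_1$ is the unique maximizer of $D()$ (here $y<d(r_1,O)$ is what keeps it strictly dominant), $\varphi$ fixes $r_1$; since $\varphi$ preserves $R'$ it preserves $\bp(R')$, and the uniqueness built into $\vguno$ then forces $\varphi$ to map the $X$-axis onto itself. The only isometries fixing $r_1$ and preserving the $X$-axis setwise are the reflection in the $X$-axis, the reflection in the line through $r_1$ perpendicular to it, and the half-turn about $r_1$. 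The latter two send the cluster $R''$, which sits strictly ahead of $r_1$ along the $X$-axis, behind $r_1$ where no robot lives; the first sends $r_n=(0,y)$ to $(0,-y)$, a vertex of the $Y$-axis carrying no robot because every robot of $R''$ has strictly negative $X$-coordinate. Each case is contradictory, so every configuration of the move stays asymmetric, and combined with $H_3$ this yields the asymmetric $C'$ in some $T_i$, $i>3$.

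The main obstacle I anticipate is certifying the ``unique maximum $D()$'' and ``$R''$ strictly ahead of and to one side of $r_1$'' claims uniformly across all intermediate positions of $r_n$, since $\bp(R')$ and the distance profile $D(\cdot)$ shift as $r_n$ travels. I would discharge this using the quantitative slack between $d(r_1,O)\ge 3\Delta$ and the bound $2\Delta \le y$ confining $r_n$, together with the fact that $R''$ stays within a region of width at most $\Delta$ around $O$; this slack is exactly what guarantees that $r_1$ never ties in $D()$ and that the reflection in the $X$-axis is the only axis-preserving candidate requiring the finer ``the $X$-coordinate of $r_n$'s mirror is $0$'' observation to be excluded.
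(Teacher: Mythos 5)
Your $H_2$ and $H_3$ arguments match the paper's, and your overall $H_1$ strategy (any automorphism must fix $r_1$ as the unique maximizer of $D()$, must preserve the line induced by $\vguno$, and the three surviving candidate isometries are excluded one by one) is a more explicit rendering of the paper's remark that the different distances of the two guards from $O$ forbid rotations and reflections. However, your $H_1$ proves too much and thereby misses the one case the paper singles out: the configuration in which all robots are collinear along the $X$-axis. Nothing in $\pretre$ forbids all of $R''$ from lying on the line induced by $\vguno$ (that line arises as a side of a bounding parallelogram of $R''$, so it generally does carry robots of $R''$), and $r_n$ itself crosses that line while travelling up the $Y$-axis. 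In that situation the reflection in the $X$-axis is a nontrivial automorphism, so your claim that every intermediate configuration is asymmetric is false. Your exclusion of that reflection rests on $r_n$ sitting at $(0,y)$ with $y\neq 0$ and on the assertion that every robot of $R''$ has strictly negative $X$-coordinate; neither holds in general, since robots of $R''$ may lie on the $Y$-axis at negative ordinates and $r_n$ passes through ordinate $0$.

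The paper does not deny this symmetry; it observes that it is harmless: the reflection axis contains every robot, so the automorphism fixes all robots pointwise, no two robots are equivalent, $r_n$ is still uniquely detected, and the symmetry disappears as soon as $r_n$ leaves the $X$-axis, after which the configuration remains asymmetric until the task ends. This is exactly the escape clause in $H_1$ (``unless $\ldots$ its formation is not prevented''). To repair your argument you need to add this collinear sub-case explicitly and show that it cannot cause a wrong robot to move or a pending-move ambiguity; with that addition the rest of your case analysis, and the conclusion that the final configuration $C'$ is asymmetric (since there $y\ge 2\Delta>0$), goes through.
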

\begin{proof}
During this task, guard $r_1$ is already placed, that is $\pretre$ holds.
 \begin{description}
\item[$H_1$.] Due to the positioning of $r_1$, the configuration can be symmetric only when all robots are collinear (along the formed $X$-axis). Regardless when this symmetry is formed, during this task, $r_n$ is always detected and as soon as it leaves the $X$-axis, the configuration becomes asymmetric and remains as such until the second guard terminates its trajectory. According to $m_3$, along its movement $r_n$ cannot meet any other robot. Notice that, according to the different distances of the two guards from $O$, the configuration cannot admit rotations nor reflections as long as the guards are idle.

\item[$H_2$.]
Move 
$m_3$ does not affect predicates $\vguno$, $\vhpdue$ and $\vdruno$, that is any obtained configuration cannot belong to $T_1$ nor to $T_2$.

\item[$H_3$.]  Robot $r_n$ always decreases the distance toward its target along the $Y$-axis, then within a finite number of \LCM cycles, unless other predicates become true, $\vgenne$ becomes true. In any case the configuration is not in $T_3$ anymore. \qed
\end{description}
\end{proof}

\begin{lemma}
From any configuration $C$ belonging to class $T_4\cap \I_\A$ the algorithm $\Aform$ eventually leads to a configuration $C'$ in $T_5$.
\end{lemma}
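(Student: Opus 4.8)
The plan is to follow the same three-part scheme used for Lemmas~\ref{lem:T1}--\ref{lem:T3}, establishing properties $H_1$, $H_2$, and $H_3$ for the move $m_4$. By the precondition $\prequattro$, both guards $r_1$ and $r_n$ are already suitably placed, so by Remark~\ref{remark:RS2} every robot can reconstruct the common reference system, agree on the embedding $F_e$, and identify the largest unmatched robot and the largest unmatched target; throughout the task only robots of $R''$ move, and they move one at a time.

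I would first dispatch $H_2$, which is the easy part: $m_4$ never moves the guards, hence each application preserves $\vguno$, $\vdruno$ and $\vgenne$, so no configuration produced in $T_4$ can satisfy the preconditions of $T_1$, $T_2$ or $T_3$, and no earlier class can be re-entered. For $H_1$ I would argue separately that no multiplicity and no symmetry arises. No multiplicity: since robots move one at a time every intermediate configuration is stationary, so it suffices to show the single moving robot $r_i$ never steps onto an occupied vertex; the already matched robots sit on the targets $f_{i+1},\ldots,f_{n-1}$, which are lexicographically larger than $f_i$, while all $r_j$ with $j<i$ remain in $Q^-$, and I would show a shortest path from $Q^-$ to $f_i$ can be kept inside the portion of $Q^+$ whose coordinates do not exceed those of $f_i$, thereby avoiding every occupied target and every unmoved robot. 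No symmetry: the two guards carry the two largest values of $D(\cdot)$ -- exactly what the thresholds $3\Delta$ and $2\Delta$ in $\vdruno$ and $\vgenne$ guarantee, since the robots of $R''$ stay confined to a region of width at most $\Delta$ -- so any automorphism of an intermediate configuration must fix both $r_1$ and $r_n$; but $r_1$ lies on the $X$-axis and $r_n$ on the $Y$-axis at strictly different distances from $O$, so no nontrivial rotation (which would fix two points) and no reflection (whose axis would contain two non-collinear points) can exist, and the configuration stays asymmetric.

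For $H_3$ I would use that each application of $m_4$ strictly decreases the length of the chosen shortest path toward $f_i$, so the largest unmatched robot reaches its target within a finite number of $\LCM$ cycles; at that moment the next largest unmatched robot becomes the mover, and since $R''$ is finite the process terminates with every robot of $R''$ matched to $F_e\setminus\{f_1,f_n\}$. This makes $\vpfenne$ true while $\vguno$, $\vhpuno$ and $\vdruno$ still hold, and I would finally check that $r_n$, still resting on the $Y$-axis at height at least $2\Delta$, satisfies $\vhrenne$ with respect to $f_n$; this yields $\precinque$ and hence moves the configuration precisely into $T_5$.

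The main obstacle will be the collision-freeness part of $H_1$: because $m_4$ admits \emph{any} shortest path, I must verify that the whole region swept by the admissible paths avoids the occupied targets $f_{i+1},\ldots,f_{n-1}$ and the not-yet-moved robots of $Q^-$, exploiting both the lexicographic order of $F_e$ and the separation guaranteed by $\Delta$. A secondary delicate point is confirming that the monotone relation required by $\vhrenne$ between $r_n$ and $f_n$ indeed holds at the end of the task -- which should follow from $y_n\ge 2\Delta\ge 2\,w(\mbp(F))$ dominating the $Y$-extent of the embedded pattern -- so that the exit class is $T_5$ rather than some later class.
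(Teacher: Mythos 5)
Your proposal follows the same scheme as the paper's proof: establish $H_1$, $H_2$, $H_3$ for move $m_4$, with $H_2$ following from the guards being untouched and $H_3$ from the monotone decrease of the mover's distance to its target together with the finiteness of $R''$. Those two parts, and your final check that $\vhrenne$ and $\vhpuno$ hold at the end so that the exit class is exactly $T_5$, are fine (indeed more explicit than the paper, which simply asserts that $\precinque$ holds at the end of the task).

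There is, however, one genuine flaw in your $H_1$ argument: you claim that \emph{no} multiplicity arises, and in particular that the mover ``avoid[s] every occupied target''. But $F$ (hence $F_e$) is a \emph{multiset}, so two consecutive targets $f_i$ and $f_{i+1}$ may be the same vertex; in that case the largest unmatched robot $r_i$ \emph{must} terminate its walk on an already occupied vertex, and your collision-freeness claim is false as stated. This is not a pathology to be excluded but the intended behaviour: property $H_1$ itself only forbids multiplicities ``unless \dots\ [the] formation [of $F$] is not prevented'', and the paper's proof says precisely ``Multiplicities can be created but only if required by the formation of $F$.'' Your argument should therefore be weakened to: the only occupied vertex a mover may enter is its own target $f_i$, and only when $f_i$ coincides with some already matched $f_j$, $j>i$ -- which is exactly a multiplicity prescribed by $F$. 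A secondary, minor point: your exclusion of reflections via ``an axis would contain two non-collinear points'' is not a valid argument (any two points are collinear); the correct reason, as in Lemma~\ref{lem:T3}, is that any automorphism must fix each guard individually (they carry the two largest, distinct values of $D(\cdot)$), the only candidate reflection axis is then the line through both guards, and that line cannot be a symmetry axis since all remaining robots lie strictly on one side of it.
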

\begin{proof}

\begin{description}
\item[$H_1$.]  Since guards $r_1$ and $r_n$ are placed, the same considerations of Lemma~\ref{lem:T3} hold, that is the configuration cannot admit reflections nor rotations during this task. Multiplicities can be created but only if required by the formation of $F$.

\item[$H_2$.]
During the whole task, predicate $\vs$ is false as guards remain placed. Hence, also predicates $\vguno$, $\vdruno$, $\vgenne$ and $\vrpeffe$ are not affected by $m_4$, that is the obtained configuration cannot belong to $T_1$, $T_2$, and $T_3$. 

\item[$H_3$.] While the task is performed, either the number of matched robots increases or the distance of one robot from its target decreases, then in a finite number of moves all robots excluding $r_1$ and $r_n$ will be matched. As already described in Section~\ref{ssec:T4}, at the end of this task $\precinque$ holds, that is $C'$ belongs to $T_5$ and no other task can be reached because the guards remain placed.\qed
\end{description}
\end{proof}


\begin{lemma}
From any configuration $C$ belonging to class $T_5\cap \I_\A$ the algorithm $\Aform$ eventually leads to a configuration $C'$ in $T_i$, $i>5$.
\end{lemma}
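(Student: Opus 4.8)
The plan is to instantiate the $H_1/H_2/H_3$ template on the single-robot motion $m_5$. During $T_5$ only $r_n$ moves while $r_1$ and every robot of $R''$ stay put, so at each activation the reference system can be rebuilt as in Remark~\ref{remark:T5}, and both $r_n$ and its target $f_n$ remain well defined.

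For $H_1$ I would first establish asymmetry. Because $r_1$ sits at distance at least $3\Delta$ from $O$ while all remaining robots are clustered near $O$, $D(r_1)$ is strictly maximal; hence every automorphism fixes $r_1$ together with the line induced by $\vguno$, rotations are excluded as $R'$ occupies the single wedge $Q^+$, and the only surviving candidate is the reflection across that line. The crux is that $m_5$ keeps $r_n$ strictly off that line throughout: on the horizontal leg it stays at height $y'\ge 2\Delta\ge 2h(\mbp(F))$, well above the embedded pattern, and on the vertical leg it remains on the column $x=x_{f_n}$ with $y'>y_{f_n}$. Thus no reflection can fix the configuration, and $C$ stays asymmetric until $r_n$ lands on $f_n$. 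For collision-freeness I would use that the horizontal leg runs above $\mbp(F)$ and that $f_n$, being the lexicographically largest vertex of $F_e$, is the topmost target on its column; so $r_n$ meets no occupied vertex before $f_n$, and a multiplicity appears only if $F$ prescribes one there.

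For $H_2$ the argument is propositional. I would check that $m_5$ leaves $\vguno$, $\vhpuno$, $\vdruno$ and $\vpfenne$ untouched (only $r_n$ moves, and it never leaves the half-plane of $R'$), while $\vhrenne$ is an invariant of $m_5$ by construction, since $x'\le x_{f_n}$ and $y'\ge y_{f_n}$ hold at every intermediate vertex. Hence $\precinque$ persists throughout the task, and as every $P_j$ with $j<5$ carries $\neg\precinque$ as a conjunct by Eq.~\ref{eq:predicates}, no configuration produced in $T_5$ can fall back into $T_1,\dots,T_4$.

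For $H_3$ I would combine the asymmetry just shown (so that $r_n$ is the unique mover) with a monotone potential: $m_5$ strictly decreases $d(r_n,f_n)$ at each activation, so $r_n$ reaches $f_n$ within finitely many \LCM cycles, at which point $R'=R''\cup\{r_n\}$ realizes $F\setminus\{f_1\}$ and $\vpfuno$ turns true. The step I expect to be the real obstacle is proving that this transition lands precisely in $T_6$, because its precondition $\presei$ is guarded by $\vdrunop$ rather than by the $\vdruno$ that the task keeps true. I would bridge the gap by observing that $R^*=\emptyset$ during $T_5$, whence $\Delta=w(\mbp(F))$, and that $O$ is the corner of the embedded $\mbp(F)$ closest to $r_1$ (since $r_1$ lies on the $X$-axis on the side of $O$ opposite to $Q^+$); therefore $d(r_1,\mbp(F))=d(r_1,O)\ge 3\Delta=3w(\mbp(F))$, i.e. $\vdrunop$. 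With $\vguno\wedge\vdrunop\wedge\vpfuno$ holding and both $\vqfuno$ and $\vs$ plainly false ($r_1$ is still far from $f_1$), predicate $P_6$ is true and $C'\in T_6$.
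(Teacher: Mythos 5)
Your proof is correct and follows essentially the same route as the paper's: the same $H_1/H_2/H_3$ template, asymmetry via the dominance of $D(r_1)$ and the single-wedge position of $R'$, preservation of $\precinque$ for $H_2$, and the monotone decrease of $d(r_n,f_n)$ for $H_3$. You additionally make explicit two details the paper leaves implicit — that the only residual symmetry candidate is a reflection along the line induced by $\vguno$ (which the paper defers to the $T_6$ lemma for the collinear end-case), and the bridge from $\vdruno$ to $\vdrunop$ via $R^*=\emptyset$ — both of which are sound.
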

\begin{proof}
During this task,  guard $r_1$ remains placed.
 \begin{description}
\item[$H_1$.] As $r_n$ moves toward its final target, the arisen configurations cannot admit reflections nor rotations as there are no other robots equivalent to $r_1$ due to $\vguno$ and $\vdruno$. A reflection (as well as a multiplicity, resp.) can occur only at the end of the task if all robots are collinear (if $f_n$ requires a multiplicity, resp.) but this can be managed by $\Aform$ as we are going to see in the next lemma devoted to $T_6$.

\item[$H_2$.]
Before $r_n$ reaches its target, $\precinque$ remains true, while predicates $\vpfuno$ and $\vs$ remain false, that is  the configuration remains in $T_5$. 
Once $r_n$ reaches $f_n$, predicate $\presei$ becomes true.

\item[$H_3$.] After each move, $r_n$ decreases its distance  from $f_n$, that is within a finite number of $\LCM$ cycles the task ends and, unless other predicates become true, the obtained configuration $C'$ belongs to $T_6$. \qed
\end{description}
\end{proof}

\begin{lemma}\label{lem:T6}
From a configuration $C$ belonging to class $T_6\cap \I_\A$ the algorithm $\Aform$ eventually leads to a configuration $C'$ in $T_7$.
\end{lemma}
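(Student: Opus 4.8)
The plan is to follow the same $H_1$/$H_2$/$H_3$ template used in the preceding lemmas, now specialised to move $m_6$. When $T_6$ starts we may assume $\pre_6=\vguno\wedge\vdrunop\wedge\vpfuno$ holds together with $\neg\pre_7$, i.e. $d_{r_1}<d_f$: thus $n-1$ robots already realise $F\setminus\{f_1\}$ through the sequence $\LSF$, while $r_1$ is the unique robot maximising $D(\cdot)$ and sits at position $d_{r_1}<d_f$ of the selected qualifying sequence $s\in\{sP_1,sP_2\}$. Move $m_6$ slides $r_1$ along the shortest side of the parallelogram associated with the lexicographically smaller qualifying sequence, monotonically increasing $d_{r_1}$ toward $d_f$ while leaving every other robot idle. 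I would prove the three invariants for this single move and conclude $C'\in T_7$.

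For $H_1$ I would first rule out multiplicities: task $T_6$ is constructed so that $\vdrunop$, i.e. $d(r_1,\mbp(F))\ge 3w(\mbp(F))$, holds along the whole trajectory of $m_6$ (property (1) of the task), so $r_1$ stays at graph-distance at least $3w(\mbp(F))\ge 3$ from every matched robot and can never step onto an occupied vertex. For symmetry, any automorphism must fix the unique farthest robot $r_1$ and hence be a reflection through $r_1$; the only candidate axis is the one parallel to $U$ through the target height of $f_1$, which is present exactly when $F\setminus\{f_1\}$ is reflective along $U$, and while $d_{r_1}<d_f$ the robot $r_1$ does not lie on it, so the configuration stays asymmetric throughout $T_6$. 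The delicate entry case is a (collinear) symmetric configuration inherited from the end of $T_5$: there $r_1$ still lies on the axis and is still the unique farthest robot, so it is the only mover, the lexicographic tie-break in $m_6$ fixes one side, and the first step breaks the symmetry; the residual borderline symmetry is deferred, as announced, to $T_7$.

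For $H_2$ it suffices to show that $\pre_6$ is preserved by $m_6$, since every $P_j$ with $j<6$ contains the conjunct $\neg\pre_6$ and is thus false while $\pre_6$ holds. Here $\vdrunop$ is preserved by the very construction of the target $t$ (property (1)); $\vguno$ is preserved because $r_1$ is steered toward the height of $f_1$, which lies inside the height band of $R'$, so the unique canonical line through $r_1$ keeps meeting every $\bp(R')$; and $\vpfuno$ persists because the matched robots keep realising $\LSF$ while $r_1$ stays the single extra $1$, now at position $d_{r_1}$, which is still below $d_f$. For $H_3$, the configuration being asymmetric, $r_1$ is unambiguous, so it is the only robot that ever moves and the \async scheduler cannot cause a harmful pending move (a stale snapshot can differ from the current configuration only in $r_1$'s own coordinate). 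Each activation of $r_1$ raises the integer $d_{r_1}$ toward the fixed value $d_f$, so after finitely many \LCM cycles $d_{r_1}=d_f$; at that moment $\vpfuno$ turns false and $\vqfuno$ becomes true, while $F$ is not yet formed ($r_1$ is still far along $U$, so $\vs$ is false), so the active predicate is exactly $P_7$ and $C'\in T_7$.

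The step I expect to be the main obstacle is the $H_2$ preservation of $\vpfuno$ carried out vertex by vertex rather than only at the endpoints. As $r_1$ advances, the smallest parallelogram enclosing $R$ with height $h(\mbp(F))$ --- and hence the corner $O_1/O_2$ and the indexing underlying $sP_1,sP_2$ --- can shift, and when $F\setminus\{f_1\}$ admits a reflection axis parallel to $U$ both sequences qualify simultaneously. I must therefore show that the lexicographic tie-break keeps $r_1$ advancing consistently along a single side (no oscillation or reversal) and that the decomposition $s=s'+\LSF$, with $s'$ carrying a lone $1$ that merely slides from $d_{r_1}$ toward $d_f$, is maintained at every intermediate vertex; this bookkeeping, together with the collinear symmetric entry case, is where the real work lies.
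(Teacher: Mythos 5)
Your overall architecture matches the paper's proof of this lemma exactly: the same $H_1$/$H_2$/$H_3$ split, multiplicities excluded because $\presei$ keeps $r_1$ far from the matched robots, preservation of $\pre_6$ for $H_2$, and monotone progress of $d_{r_1}$ toward $d_f$ for $H_3$. The divergence is confined to the one step you yourself flag as ``where the real work lies'', and there your stated argument is not merely incomplete but rests on a false claim.

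You assert that the only candidate reflection axis is the one ``parallel to $U$ through the target height of $f_1$'', so that $r_1$ cannot lie on it while $d_{r_1}<d_f$ and the configuration stays asymmetric throughout $T_6$. That is not true: when $F\setminus\{f_1\}$ admits a reflection along $U$, the axis is the midline of the matched sub-pattern, and it passes through the height of $f_1$ only in the special sub-case where $F$ itself is symmetric. In general $f_1$ is off that axis, it has a mirror image $f_1'$, and $r_1$ can reach the axis height strictly before $d_{r_1}=d_f$; at that point two embeddings and two candidate targets coexist. The paper's proof resolves exactly this by a case analysis: if $f_1$ lies on the axis, then $r_1$ reaching the axis coincides with $d_{r_1}=d_f$ and $\presette$ fires at that instant; otherwise, either $f_1$ and $f_1'$ lie on the same side of $r_1$, in which case only one of $sP_1,sP_2$ satisfies $\vpfuno$ (in the other one $d_{r_1}>d_f$) and $r_1$ stops at the height of the nearer target, or $r_1$ lies between the two heights, both sequences qualify, they are provably distinct because $r_1$ is not on the axis, and the lexicographic tie-break in $m_6$ selects one consistently so that $r_1$ never crosses the axis. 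Without this analysis your $H_1$ claim and the ``no oscillation'' property you defer are unsupported; note also that the transition to $T_7$ may legitimately occur at the height of $f_1'$ rather than $f_1$, a possibility your $H_3$ accounting (which tracks convergence to the fixed value $d_f$ only) does not cover.
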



\begin{figure}[t]
   \graphicspath{{fig/}}
   \centering
   \def\svgwidth{\columnwidth}
   {\large \scalebox{0.7}{
\begingroup%
  \makeatletter%
  \providecommand\color[2][]{%
    \errmessage{(Inkscape) Color is used for the text in Inkscape, but the package 'color.sty' is not loaded}%
    \renewcommand\color[2][]{}%
  }%
  \providecommand\transparent[1]{%
    \errmessage{(Inkscape) Transparency is used (non-zero) for the text in Inkscape, but the package 'transparent.sty' is not loaded}%
    \renewcommand\transparent[1]{}%
  }%
  \providecommand\rotatebox[2]{#2}%
  \newcommand*\fsize{\dimexpr\f@size pt\relax}%
  \newcommand*\lineheight[1]{\fontsize{\fsize}{#1\fsize}\selectfont}%
  \ifx\svgwidth\undefined%
    \setlength{\unitlength}{447.54366295bp}%
    \ifx\svgscale\undefined%
      \relax%
    \else%
      \setlength{\unitlength}{\unitlength * \real{\svgscale}}%
    \fi%
  \else%
    \setlength{\unitlength}{\svgwidth}%
  \fi%
  \global\let\svgwidth\undefined%
  \global\let\svgscale\undefined%
  \makeatother%
  \begin{picture}(1,0.36581271)%
    \lineheight{1}%
    \setlength\tabcolsep{0pt}%
    \put(0,0){\includegraphics[width=\unitlength,page=1]{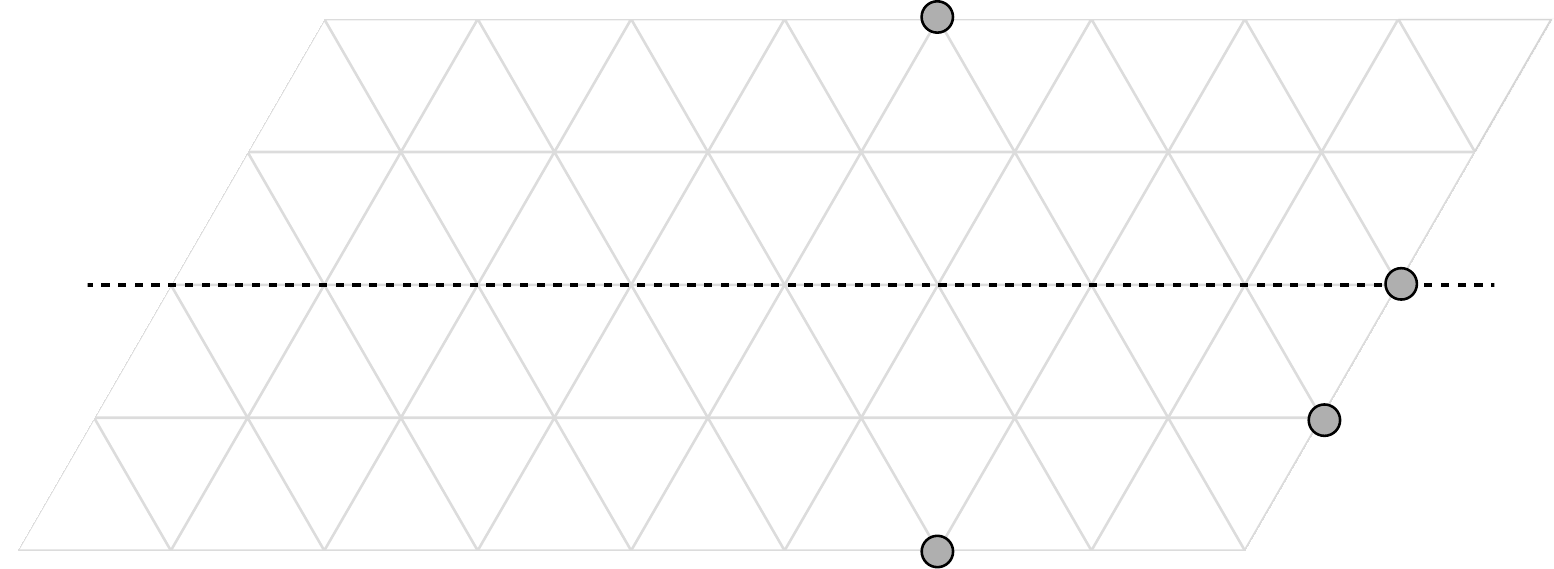}}%
    \put(0.03183523,0.01157944){\color[rgb]{0,0,0}\makebox(0,0)[lt]{\lineheight{1.25}\smash{\begin{tabular}[t]{l}$r_1$\end{tabular}}}}%
    \put(0.57504652,0.082061){\color[rgb]{0,0,0}\makebox(0,0)[lt]{\lineheight{1.25}\smash{\begin{tabular}[t]{l}$f_1'$\end{tabular}}}}%
    \put(0,0){\includegraphics[width=\unitlength,page=2]{INKT6-proof.pdf}}%
    \put(0.58138586,0.25862446){\color[rgb]{0,0,0}\makebox(0,0)[lt]{\lineheight{1.25}\smash{\begin{tabular}[t]{l}$f_1$\end{tabular}}}}%
    \put(0,0){\includegraphics[width=\unitlength,page=3]{INKT6-proof.pdf}}%
    \put(0.06985707,0.20121979){\color[rgb]{0,0,0}\transparent{0.41000003}\makebox(0,0)[lt]{\lineheight{1.25}\smash{\begin{tabular}[t]{l}$L$\end{tabular}}}}%
  \end{picture}%
\endgroup%
}}
  \caption{%
An example of the only possible symmetry that can arise during task $T_6$. 
}
\label{fig:T6_proof}
\end{figure}

\begin{proof}
During this task, guard $r_1$ moves so as to make $d_{r_1}=d_{f_1}$.
 \begin{description}
\item[$H_1$.]
During this phase the algorithm does not generate any multiplicity since $\presei$ remains true and then $r_1$ is sufficiently far from any other robot.
Regarding symmetries (cf.~Figure~\ref{fig:T6_proof}), the only symmetric configuration possible is the one with an axis parallel to the direction induced by $\vguno$, and $f_1$ can be on the axis or not. In the first case, the whole pattern is symmetric; when $r_1$ reaches the axis then $d_{r_1}=d_{f_1}$ holds and predicate $\presette$ becomes true. Otherwise the final pattern is asymmetric and there are two possible embeddings and two possible targets for $r_1$, $f_1$ and its equivalent point $f_1'$ with respect to the axis of symmetry.
One of the two is reachable by $r_1$ without crossing the axis. Targets $f_1$ and $f_1’$ lie in the same half plane or not. In the first case only one among the sequences $sP_1$ and $sP_2$ satisfies the condition in $\vpfuno$ because in one of them $d_{r_1}>d_{f_1}$. Then $r_1$ moves towards $f_1$ and when it reaches the height of $f_1’$ predicate $\presette$ becomes true and the configuration is in $T_7$. 
When $r_1$ lies between $f1$ and $f_1’$, both the sequences $sP_1$ and $sP_2$ satisfy the condition in $\vpfuno$ and move $m_6$ chooses the smaller one since $sP_1$ and $sP_2$ must be different because $r_1$ is not on the axis.
 Robot $r_1$ increases its height to align with the target until $d_{r_1}=d_{f_1}$ and the configuration is in $T_7$.

\item[$H_2$.]
Before $r_1$ reaches the height of its target, $\presei$ remains true, and when $d_{r_1}=d_{f_1}$, $\presette$ holds, hence $C'$ is in $T_7$. Clearly $C'$ cannot belong to $T_8$. 

\item[$H_3$.]
The absolute difference between $d_r$ and $d_f$ decreases by one at each move until $d_{r_1}=d_{f_1}$ so that move $m_6$ is applied only a finite number of times. \qed
\end{description}
\end{proof}

\begin{lemma}\label{lem:T7}
From a configuration $C$ belonging to class $T_7\cap \I_\A$ the algorithm $\Aform$ eventually leads to a configuration $C'$ in $T_8$.
\end{lemma}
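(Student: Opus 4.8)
My plan is to instantiate the usual scheme and prove properties $H_1$, $H_2$, and $H_3$ for task $T_7$. When $T_7$ is entered, the precondition $\presette=\vqfuno$ holds, so the $n-1$ robots in $R\setminus\{r_1\}$ already realize $F\setminus\{f_1\}$ and task $T_6$ has aligned $r_1$ (by enforcing $d_{r_1}=d_f$) so that $r_1$ can reach $f_1$ through a single straight trajectory along the canonical direction $U$ induced by $\vguno$; move $m_7$ just slides $r_1$ along $U$ toward $f_1$. I would settle $H_3$ and $H_2$ quickly, as they only track an invariant, and devote the real effort to the symmetry analysis required by $H_1$.

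For $H_3$ I would note that each step of $m_7$ strictly decreases $d(r_1,f_1)$ and leaves the other robots fixed; hence after finitely many \LCM cycles $r_1$ reaches $f_1$, $R$ becomes similar to $F$, the variable $\vs$ turns true, and the configuration belongs to $T_8$. For $H_2$ I would check that $\vqfuno$ is invariant under $m_7$ as long as $r_1\neq f_1$: sliding $r_1$ along $U$ keeps it in the same grid column, hence its scan position $d_{r_1}$ stays equal to $d_f$, the other robots keep forming $F\setminus\{f_1\}$, and therefore $\ell(\mbp(R))$ retains the form $\ell'+\LSF$ prescribed by $\vqfuno$ while $\vs$ stays false. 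Consequently the configuration cannot satisfy any precondition $\pre_j$ with $j<7$, it remains in $T_7$ until $r_1$ arrives, and it enters $T_8$ exactly at that moment.

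The main obstacle is $H_1$, namely bounding the symmetries and multiplicities produced during this final approach; several cases were already resolved in Lemma~\ref{lem:T6}, so here I would only need to argue about what can still appear while $r_1$ travels along $U$. No multiplicity is created along the (empty) trajectory, and one occurs on $f_1$ only if $F$ itself prescribes it there, which is admissible. For symmetries the key claim I would establish is the one anticipated at the end of Section~\ref{ssec:T6-T7}: every automorphism that can arise during $T_7$ is a reflection whose axis is parallel to $U$ and passes through $r_1$. I would rule out rotations and reflections sending $r_1$ onto a distinct occupied vertex by exploiting that $r_1$ approaches the already-formed group $F\setminus\{f_1\}$ from one side along $U$ with the alignment $d_{r_1}=d_f$ fixed (cf. Figure~\ref{fig:T6_proof}); any such map would have to send $r_1$ to an empty vertex on the far side. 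In the surviving reflective case both $r_1$ and the target $f_1$ lie on the axis and are fixed by the reflection, so the identity of the moving robot and of its target stay unambiguous: even under the \async scheduler only $r_1$ is designated to move, it heads toward the unique vertex $f_1$, and the configuration reached upon arrival is exactly $F$ (which is permitted to be symmetric). The crux of the whole lemma is precisely this uniqueness-under-reflection step, which guarantees that $m_7$ always completes the pattern and that the algorithm settles in $T_8$.
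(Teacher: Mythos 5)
Your overall skeleton matches the paper's (properties $H_1$--$H_3$ checked for move $m_7$, with the weight on $H_1$), and your treatments of $H_2$ and $H_3$ are essentially the paper's. The gap is in $H_1$, exactly where you yourself locate the crux. Your argument for excluding rotations and reflections that map $r_1$ onto a distinct robot is the single sentence that ``any such map would have to send $r_1$ to an empty vertex on the far side''; this begs the question, since whether that image vertex is occupied depends on $F\setminus\{f_1\}$ and on the current position of $r_1$, and nothing in ``approaching from one side with $d_{r_1}=d_f$'' rules it out. The paper's proof devotes most of its length to precisely these cases: it enumerates the admissible symmetries of $\GT$ (rotations by $60^\circ$, $120^\circ$, $180^\circ$, and reflection axes at $0^\circ,30^\circ,\ldots,150^\circ$ with respect to the $X$-axis) and kills each one by exploiting the lexicographic minimality built into the embedding of Definition~\ref{def:embedding}: for a $0^\circ$ or $90^\circ$ reflection pairing $r_1$ with some $r_1'$, the alternative bounding parallelogram $\bp'(F)$ through $r_1'$ would yield a smaller reading than the one from $O$, contradicting the choice of embedding (cf.\ Figs.~\ref{fig:asseZero} and~\ref{fig:asseNovanta}); for a $180^\circ$ rotation, comparing the column readings from $O$ and from the opposite corner forces all columns between $r_1$ and $r_1'$ to be empty, contradicting $n\ge 3$. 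None of this is recoverable from your one-line claim, so the exclusion step is genuinely missing.

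In addition, your description of the surviving symmetric case is too narrow. You assume that in the residual reflection both $r_1$ and $f_1$ lie on the axis, but the paper must also handle the case where $R\setminus\{r_1\}$ is symmetric about an axis parallel to the travel direction while $F$ itself is asymmetric: there are then two equivalent embeddings and two candidate targets $f_1,f_1'$, and $m_7$ must send $r_1$ to the one reachable without crossing the axis. It must also handle $r_1$ landing on a $30^\circ$ axis, where $\mbp(R)$ is no longer unique and one has to check that $r_1$ is still met first in both equivalent readings so that it remains identifiable; relatedly, the paper proves separately that $r_1$ stays recognizable throughout (the minimal reading of $\mbp(R)$ meets $r_1$ first, else the configuration would be symmetric), which you only assert implicitly. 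These omissions matter because they are exactly where an adversarial scheduler could otherwise make two robots, or one robot with two targets, disagree.
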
 

\begin{proof}

During this task, guard $r_1$ straightly moves toward its target. Since $\vqfuno$ holds it is possible to derive the embedding of the pattern from $\ell(\mbp(F))$ and consequently the $X$ and $Y$ axis that we refer to the proof. 
 \begin{description}
\item[$H_1$.] We show that while $r_1$ moves toward $f_1$ no reflections, no rotations, no multiplicities can be created that prevent the finalization of the task. In particular, we first show that no rotations are possible, then we analyze reflections showing that none of them can admit a robot equivalent to $r_1$. Hence, if a reflection is created, then $r_1$ must be on the axis of symmetry, and we show this happens only if $F\setminus\{f_1\}$ is symmetric respect to that axis. Regarding the multiplicities, $r_1$ can make one only once $f_1$ is reached.
\medskip

\emph{Rotations.} The minimal possible angle of rotation is $60^\circ$ and its multiples $120^\circ$ and $180^\circ$, clockwise and anti-clockwise.
The convex hull of any configuration with rotational symmetry with angle of rotation of $60^\circ$ is an hexagon. Assuming that such a configuration is formed when $r_1$ is approaching its target, a part of the convex hull should be in the quadrant where $r_1$ lies. Then the embedding of the pattern is not positioned according to the rule that the shorter side of the parallelogram is parallel to the $Y$-axis (cf Definition~\ref{def:embedding}). With the same arguments we can exclude rotations of $120^\circ$. Regarding to rotations of $180^\circ$, let us assume that $r_1$ creates such a symmetry when approaching its target. The embedding $F_e$ is done by construction in such a way that the sequence of integers read from the origin is smaller than the one read from the corner $P=(x_p,y_p)$ at the opposite angle of $60^\circ$. The first column read from $P$ must have a single robot $r'_1$, symmetric to $r_1$, because this column matches the one with $r_1$. 

By hypothesis, the pattern sequence read from $O$ must be lower than the one read from $P$, then the first column cannot have more than one target and in particular this target must be at the same distance from $O$ than $r'_1$ from $P$, because $r_1$ is moving horizontally. Reading the configuration forward from $P$, there must be a sequence of columns of zeros, at least one, each corresponding to an empty column read from $r_1$ to the $Y$-axis, that is empty. In turn, this corresponds to a sequence of columns of zeros in the pattern read from $O$, because by hypothesis must be lower than the one read from $P$. Then, by rotation, these columns correspond to more empty columns in the configuration read from $P$. Continuing, we would have only empty columns between $r_1$ and $r'_1$, contradicting the hypothesis that the robots are at least three.
\medskip

\emph{Reflections.}
Regarding reflections  we have to analyze possible axis of reflection at $0^\circ$, $30^\circ$, $60^\circ$, $90^\circ$, $120^\circ$, $150^\circ$, with respect to the $X$-axis in clockwise direction. Moreover we distinguish between two cases: when $r_1$ becomes equivalent to another robot of the configuration and when $r_1$ goes on an axis of symmetry.
\begin{figure}[t]
    \graphicspath{{fig/}}
    \centering
    \def\svgwidth{\columnwidth}
    {\large \scalebox{0.65}{
\begingroup%
  \makeatletter%
  \providecommand\color[2][]{%
    \errmessage{(Inkscape) Color is used for the text in Inkscape, but the package 'color.sty' is not loaded}%
    \renewcommand\color[2][]{}%
  }%
  \providecommand\transparent[1]{%
    \errmessage{(Inkscape) Transparency is used (non-zero) for the text in Inkscape, but the package 'transparent.sty' is not loaded}%
    \renewcommand\transparent[1]{}%
  }%
  \providecommand\rotatebox[2]{#2}%
  \newcommand*\fsize{\dimexpr\f@size pt\relax}%
  \newcommand*\lineheight[1]{\fontsize{\fsize}{#1\fsize}\selectfont}%
  \ifx\svgwidth\undefined%
    \setlength{\unitlength}{534.84049398bp}%
    \ifx\svgscale\undefined%
      \relax%
    \else%
      \setlength{\unitlength}{\unitlength * \real{\svgscale}}%
    \fi%
  \else%
    \setlength{\unitlength}{\svgwidth}%
  \fi%
  \global\let\svgwidth\undefined%
  \global\let\svgscale\undefined%
  \makeatother%
  \begin{picture}(1,0.44842438)%
    \lineheight{1}%
    \setlength\tabcolsep{0pt}%
    \put(0,0){\includegraphics[width=\unitlength,page=1]{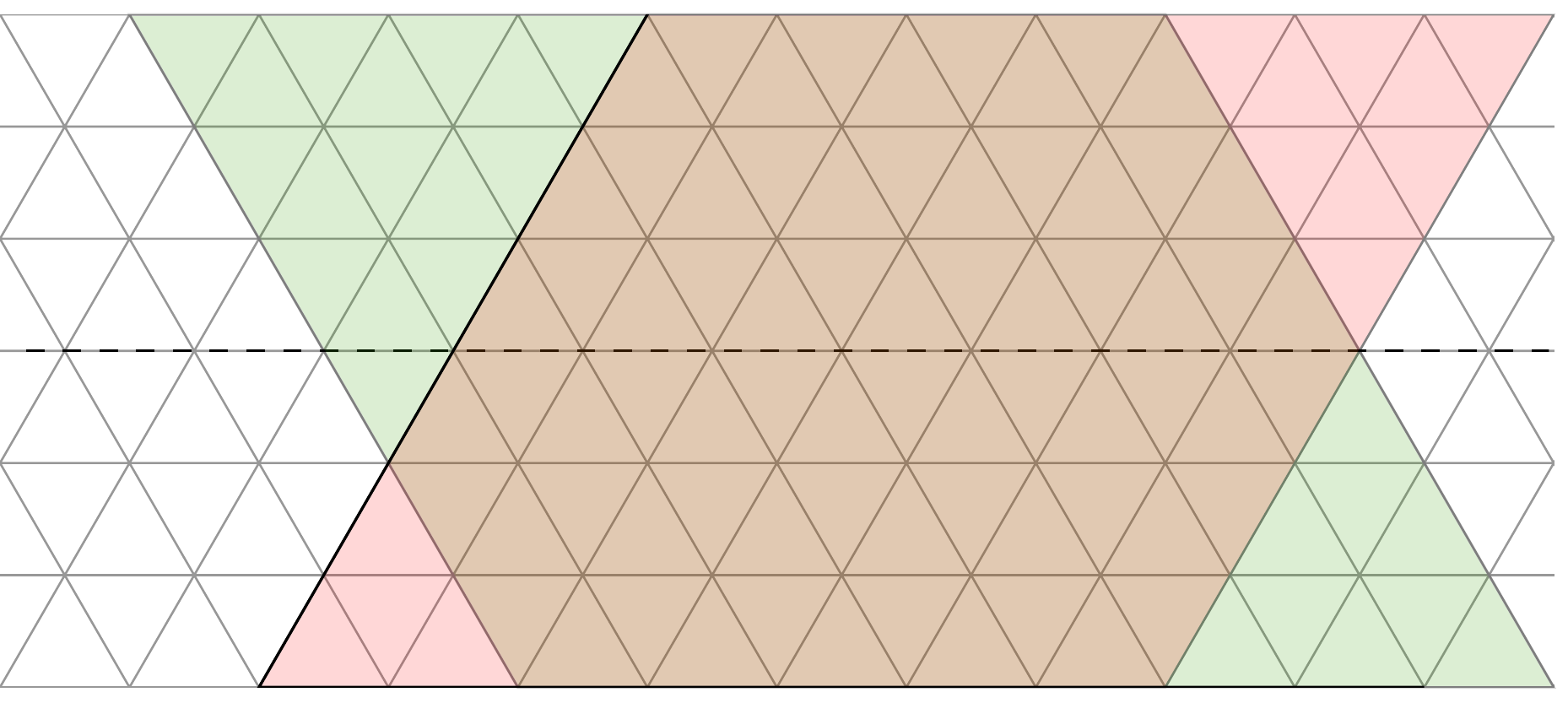}}%
    \put(0.28901791,0.38628456){\color[rgb]{0,0,0}\makebox(0,0)[lt]{\lineheight{1.25}\smash{\begin{tabular}[t]{l}$r_1$\end{tabular}}}}%
    \put(0.30635496,0.05476921){\color[rgb]{0,0,0}\makebox(0,0)[lt]{\lineheight{1.25}\smash{\begin{tabular}[t]{l}$r_1'$\end{tabular}}}}%
    \put(0,0){\includegraphics[width=\unitlength,page=2]{asseZero.pdf}}%
    \put(0.39523825,0.36303041){\color[rgb]{0,0,0}\makebox(0,0)[lt]{\lineheight{1.25}\smash{\begin{tabular}[t]{l}$f_1$\end{tabular}}}}%
    \put(0.19109751,0.01663741){\color[rgb]{0,0,0}\makebox(0,0)[lt]{\lineheight{1.25}\smash{\begin{tabular}[t]{l}$bp(F)$\end{tabular}}}}%
    \put(0.9122055,0.01959682){\color[rgb]{0,0,0}\makebox(0,0)[lt]{\lineheight{1.25}\smash{\begin{tabular}[t]{l}$bp'(F)$\end{tabular}}}}%
    \put(0,0){\includegraphics[width=\unitlength,page=3]{asseZero.pdf}}%
    \put(0.13860248,0.00199388){\color[rgb]{0,0,0}\makebox(0,0)[lt]{\lineheight{1.25}\smash{\begin{tabular}[t]{l}$O$\end{tabular}}}}%
  \end{picture}%
\endgroup%
}}
   	\caption{
An example in which $r_1$ becomes equivalent to another robot $r'_1$ respect to an axis of $0^\circ$ while moving toward $f_1$.
}
\label{fig:asseZero}
\end{figure}

Firstly, we analyze the case of a reflection at $0^\circ$ when  $r_1$ becomes equivalent to another robot $r'_1$ while moving toward $f_1$. Now consider the other possible $\bp'(F)$ having two sides parallel to the $X$-axis and shared with the chosen $\bp(F)$. One side of $\bp'(F)$ passes through $r'_1$ and the reading  from this side is lower than the reading of $\bp(F)$ from the origin. Then the embedding chosen was not coherent with the definition, a contradiction.  

For the cases of reflections at $30^\circ$ and $60^\circ$ the supposed robot $r'_1$ equivalent to $r_1$, would lie outside the embedding of $\mbp(F)$.
\begin{figure}[t]
    \graphicspath{{fig/}}
    \centering
    \def\svgwidth{\columnwidth}
    {\large \scalebox{0.65}{
\begingroup%
  \makeatletter%
  \providecommand\color[2][]{%
    \errmessage{(Inkscape) Color is used for the text in Inkscape, but the package 'color.sty' is not loaded}%
    \renewcommand\color[2][]{}%
  }%
  \providecommand\transparent[1]{%
    \errmessage{(Inkscape) Transparency is used (non-zero) for the text in Inkscape, but the package 'transparent.sty' is not loaded}%
    \renewcommand\transparent[1]{}%
  }%
  \providecommand\rotatebox[2]{#2}%
  \newcommand*\fsize{\dimexpr\f@size pt\relax}%
  \newcommand*\lineheight[1]{\fontsize{\fsize}{#1\fsize}\selectfont}%
  \ifx\svgwidth\undefined%
    \setlength{\unitlength}{441.8775135bp}%
    \ifx\svgscale\undefined%
      \relax%
    \else%
      \setlength{\unitlength}{\unitlength * \real{\svgscale}}%
    \fi%
  \else%
    \setlength{\unitlength}{\svgwidth}%
  \fi%
  \global\let\svgwidth\undefined%
  \global\let\svgscale\undefined%
  \makeatother%
  \begin{picture}(1,0.46724591)%
    \lineheight{1}%
    \setlength\tabcolsep{0pt}%
    \put(0,0){\includegraphics[width=\unitlength,page=1]{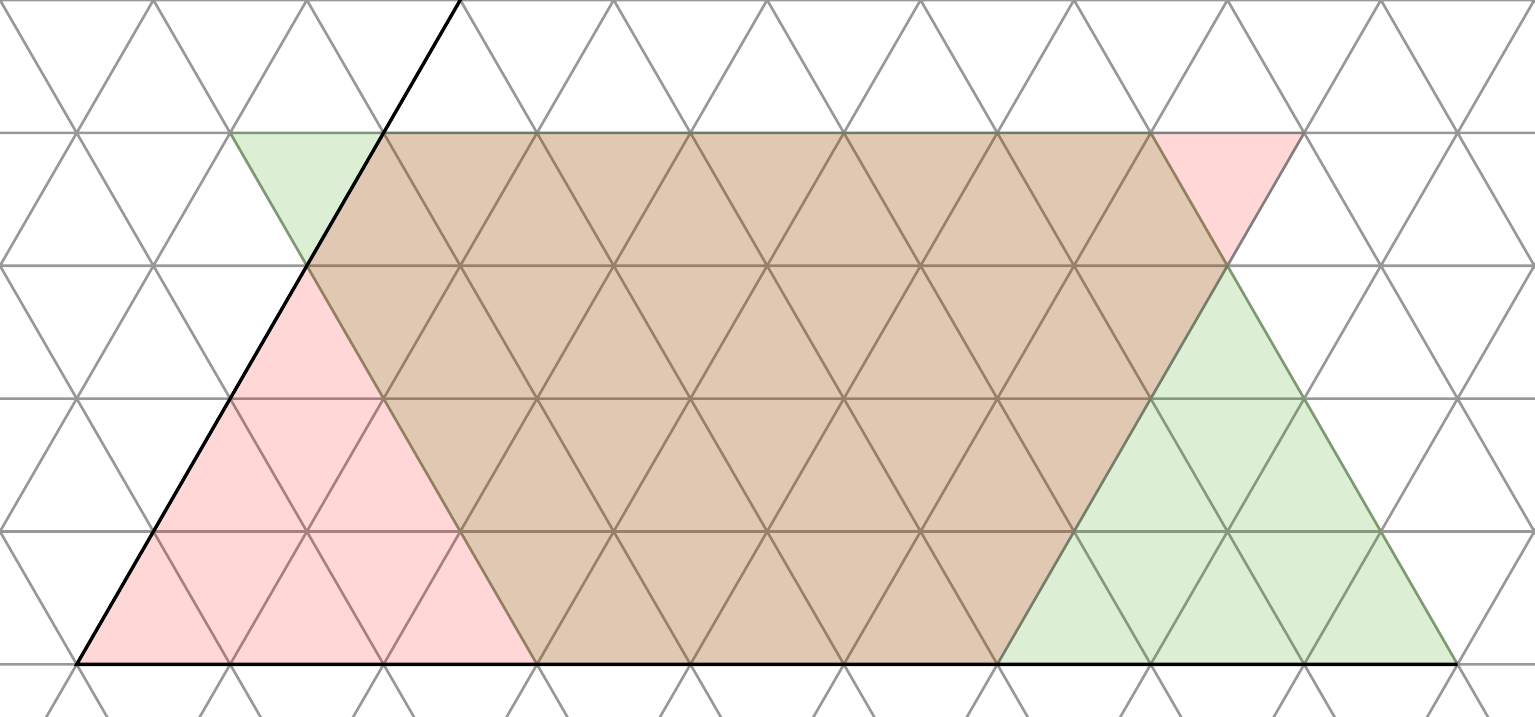}}%
    \put(0.0541537,0.26773516){\color[rgb]{0,0,0}\makebox(0,0)[lt]{\lineheight{1.25}\smash{\begin{tabular}[t]{l}$r_1$\end{tabular}}}}%
    \put(0,0){\includegraphics[width=\unitlength,page=2]{asseNovanta.pdf}}%
    \put(0.82497183,0.26407787){\color[rgb]{0,0,0}\makebox(0,0)[lt]{\lineheight{1.25}\smash{\begin{tabular}[t]{l}$r_1'$\end{tabular}}}}%
    \put(0,0){\includegraphics[width=\unitlength,page=3]{asseNovanta.pdf}}%
    \put(0.21854093,0.30287094){\color[rgb]{0,0,0}\makebox(0,0)[lt]{\lineheight{1.25}\smash{\begin{tabular}[t]{l}$f_1$\end{tabular}}}}%
    \put(0.07174893,0.04598753){\color[rgb]{0,0,0}\makebox(0,0)[lt]{\lineheight{1.25}\smash{\begin{tabular}[t]{l}$bp(F)$\end{tabular}}}}%
    \put(0.86934269,0.04240552){\color[rgb]{0,0,0}\makebox(0,0)[lt]{\lineheight{1.25}\smash{\begin{tabular}[t]{l}$bp'(F)$\end{tabular}}}}%
    \put(0.01328216,0.0378153){\color[rgb]{0,0,0}\makebox(0,0)[lt]{\lineheight{1.25}\smash{\begin{tabular}[t]{l}$O$\end{tabular}}}}%
    \put(0,0){\includegraphics[width=\unitlength,page=4]{asseNovanta.pdf}}%
  \end{picture}%
\endgroup%
}}
    \caption{%
An example in which $r_1$ becomes equivalent to another robot $r'_1$  respect to an axis of $90^\circ$ while moving toward $f_1$.
}
\label{fig:asseNovanta}
\end{figure}

Regarding the case of a reflection at  $90^\circ$, that is a reflection perpendicular to the $X$-axis,  $r_1$ becomes equivalent to another robot $r'_1$ while moving toward its target. Now consider the other possible $\bp'(F)$ having two sides parallel to the $X$-axis and shared with the chosen $\bp(F)$. As in the case of a reflection at $0^\circ$, one side of $\bp'(F)$ passes through $r'_1$ and the reading  from this side is lower than the reading of $\bp(F)$ from the origin. Then the embedding chosen was not coherent with the definition, a contradiction. 

Regarding the case of a reflection at  $120^\circ$, the reflectional axis is parallel to the $Y$-axis, and $r_1$ becomes equivalent to another robot $r'_1$ while moving toward its target. 
The axis of symmetry must be between $O$ and the half of the longest side of $\bp(F)$. We now compare the reading of $\bp(F)$ from $O$ with the reading of $\bp(F)$ starting from the corner at the opposite angle of $60^\circ$ respect to $O$, call it $P$. The first column read from $P$ has at most one robot, $r'_1$ equivalent to $r_1$, then as many empty columns as those found from $r_1$ to the $Y$-axis in $mpb(R)$,until a first robot specular to the one read from $O$. Since the number of empty columns read from $P$ is greater than the one read from $O$, the reading from $P$ is lower than the reading from $O$ hence a contradiction.

In case of a reflection axis at  $150^\circ$, the $Y$-axis reflects on the $X$-axis, then there is no possible robot $r'_1$ in the configuration that can be equivalent to $r_1$ when approaching to its target. 

In what follows, we analyze the case when $R\setminus\{r_1\}$ forms an axis of symmetry.

Consider the case of a reflection at $0^\circ$. If the pattern is symmetric respect to that axis, $f_1$ is on the axis, and $r_1$ reaches the axis and proceeds along the axis without breaking the symmetry, by following the trajectory specified by move $m_7$. If the pattern is asymmetric, then there are two possible embedding of $F$ on $R\setminus\{r_1\}$ and then there must be another target $f'_1$ equivalent to $f_1$ obtained by reflecting the embedding such that the trajectory computed by the move of $r_1$ does not cross the axis (see Lemma~\ref{lem:T6}). According to move $m_7$, actually robot $r_1$ moves to $f'_1$ to finalize the task. 

\begin{figure}[t]
   \graphicspath{{fig/}}
   \centering
   \def\svgwidth{\columnwidth}
   {\large \scalebox{0.5}{
\begingroup%
  \makeatletter%
  \providecommand\color[2][]{%
    \errmessage{(Inkscape) Color is used for the text in Inkscape, but the package 'color.sty' is not loaded}%
    \renewcommand\color[2][]{}%
  }%
  \providecommand\transparent[1]{%
    \errmessage{(Inkscape) Transparency is used (non-zero) for the text in Inkscape, but the package 'transparent.sty' is not loaded}%
    \renewcommand\transparent[1]{}%
  }%
  \providecommand\rotatebox[2]{#2}%
  \newcommand*\fsize{\dimexpr\f@size pt\relax}%
  \newcommand*\lineheight[1]{\fontsize{\fsize}{#1\fsize}\selectfont}%
  \ifx\svgwidth\undefined%
    \setlength{\unitlength}{443.17656463bp}%
    \ifx\svgscale\undefined%
      \relax%
    \else%
      \setlength{\unitlength}{\unitlength * \real{\svgscale}}%
    \fi%
  \else%
    \setlength{\unitlength}{\svgwidth}%
  \fi%
  \global\let\svgwidth\undefined%
  \global\let\svgscale\undefined%
  \makeatother%
  \begin{picture}(1,1.04085353)%
    \lineheight{1}%
    \setlength\tabcolsep{0pt}%
    \put(0,0){\includegraphics[width=\unitlength,page=1]{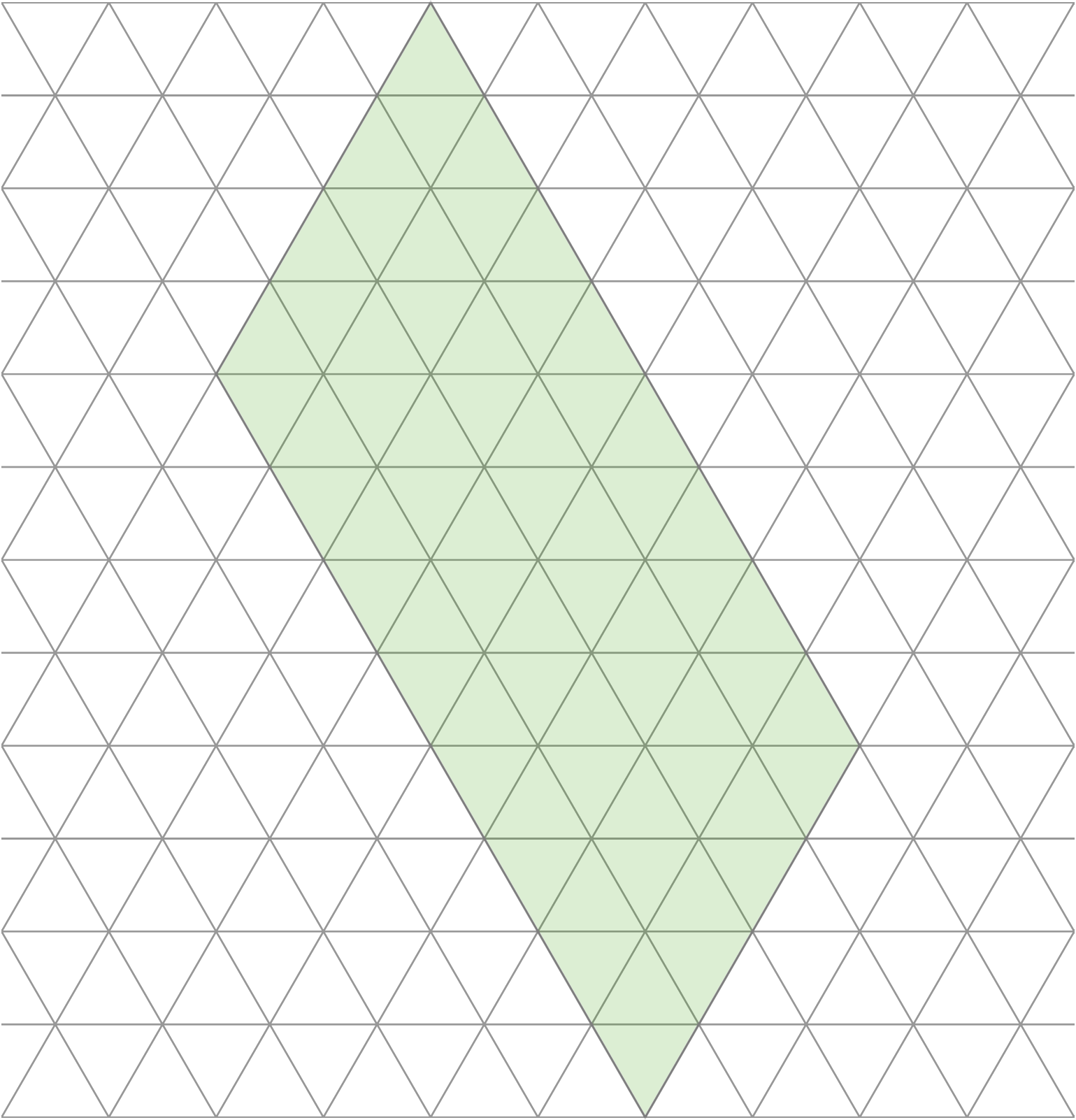}}%
    \put(0.03599016,0.35844401){\color[rgb]{0,0,0}\makebox(0,0)[lt]{\lineheight{1.25}\smash{\begin{tabular}[t]{l}$P$\end{tabular}}}}%
    \put(0,0){\includegraphics[width=\unitlength,page=2]{asseTrenta.pdf}}%
    \put(0.19549686,0.72221673){\color[rgb]{0,0,0}\makebox(0,0)[lt]{\lineheight{1.25}\smash{\begin{tabular}[t]{l}$r_1$\end{tabular}}}}%
    \put(0,0){\includegraphics[width=\unitlength,page=3]{asseTrenta.pdf}}%
    \put(0.42794377,0.70152072){\color[rgb]{0,0,0}\makebox(0,0)[lt]{\lineheight{1.25}\smash{\begin{tabular}[t]{l}$f_1$\end{tabular}}}}%
    \put(0.32138226,0.50404344){\color[rgb]{0,0,0}\makebox(0,0)[lt]{\lineheight{1.25}\smash{\begin{tabular}[t]{l}$f_1'$\end{tabular}}}}%
    \put(0.38258327,0.96288476){\color[rgb]{0,0,0}\makebox(0,0)[lt]{\lineheight{1.25}\smash{\begin{tabular}[t]{l}$P'$\end{tabular}}}}%
  \end{picture}%
\endgroup%
}}
   \caption{Robot $r_1$ on a reflection axis of $30^\circ$ and the equivalent parallelograms $P$ and $P'$.}
\end{figure}

In case of a reflection axis of $30^\circ$, $r_1$ goes towards that axis and when it lands on it there two equivalent parallelograms 
$P=\mbp(R)$ and its reflection $P'$. Let $l(P)$ and $l(P')$ the readings of the two parallelograms. These sequences are equivalent and they both find $r_1$ as the first robot. In each sequence $r_1$ is univocally determined and it can move respect to either $P$ or $P'$ toward $f_1$ or $f_1'$, respectively.  As $r_1$ moves away from the axis, there is a unique $\mbp(R)$ until $r_1$ reaches its target. 

It is easy to see that when moving $r_1$ cannot go on an axis of $60^\circ$, $90^\circ$, and $120^\circ$ before reaching its target.

Regarding to axes of $150^\circ$, $r_1$ could go on such an axis only if $f_1$ is under the reflection axis, but to be symmetric with such an axis the pattern should have the longest side laying on the $Y$-axis and this is not coherent with the embedding.

In conclusion, when moving $r_1$ does not create any rotation or reflection with a robot becoming equivalent to $r_1$. The two cases in which $r_1$ creates a symmetric configuration is when it is on an horizontal axis and it moves along that axis or when is on a $30^\circ$ axis and in this situation $r_1$ can always break the symmetry.

In order to conclude the proof of $H_1$, we also need to ensure that $r_1$ is always recognized until reaching $f_1$. In fact, as long as $r_1$ is sufficiently far away from the other robots it is easily recognizable according to its distance from $O$. When $r_1$ is close to the other robots is still always recognizable. In fact the parallelogram $\mbp(R)$ is unique (apart from the case in which $r_1$ is on an axis of symmetry at $0^\circ$ and $30^\circ$) and it can't be a square due the position of $r_1$ then there are two sequences of integers associated to the canonical corners of the $\mbp(R)$. The minimal one finds $r_1$ as the first robot; in fact if there were another robot playing the role of $r_1$ in the minimal reading that reading would be a palindrome to the first sequence and that means that the configuration is symmetric. Since the algorithm doesn't create symmetric configurations, such palindrome reading cannot exists and then $r_1$ is unique. 
If $r_1$ lies on an axis, there are two parallelograms equivalent to $\mbp(R)$ but the sequence of integers associated with these parallelograms finds $r_1$ as the first robot, then again $r_1$ is uniquely identified.

\item[$H_2$.] 
During the movement of $r_1$, predicate $\presette$ remains true because $n-1$ robots are already matched, they all stay still and $r_1$ straightly moves towards its target along the direction of the longest side of $\mbp(F)$. This implies that the sequence $\ell(\mbp(R))$ keeps its structure given by the concatenation of a subsequence $\ell’$ made of only 0s and just one 1 in position $d_{r_1}$ and a subsequence $\LSF$ that encodes the position of the robots already matched. When $r_1$ reaches its target $\ell(\mbp(R))=\ell(\mbp(F))$ and the configuration is in $T_8$.

\item[$H_3$.] After each move, $r_1$ decreases the distance from $f_1$ while the sequence $\ell’$ gets smaller by a number of 0s equal to the shorter side of $\mbp(R)$ until  $\ell(\mbp(R))=\ell(\mbp(F))$. This implies that within a finite number of $\LCM$ cycles $\vs$ becomes true and $C'$ belongs to $T_8$. \qed
\end{description}
\end{proof}

\begin{remark}
We have shown that in fact algorithm $\Aform$ manages not only asymmetric configurations but also some leader configurations where only one robot has to move and it is recognizable as one of the two guards $r_1$ or $r_n$.
\end{remark}

\begin{theorem}[Correctness]
Let $C = (\GT,\lambda)$ be any initial configuration with $n\ge 3$ \async robots, and let $F$ be any pattern (possibly with multiplicities) such that $|F|=n$. Then, $\Aform$ is able to form $F$ starting from $C$.
\end{theorem}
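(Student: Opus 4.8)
The plan is to assemble the per-task Lemmas~\ref{lem:T1}--\ref{lem:T7} into a single global argument, relying on the fact---already established through Remarks~\ref{rem:prop2} and~\ref{rem:prop3} together with Section~\ref{ssec:compute}---that the predicates $P_1,\ldots,P_8$ satisfy properties $\Prop_1$, $\Prop_2$, $\Prop_3$ and therefore induce a partition of $\I_\A$: for every configuration reachable by $\Aform$ exactly one predicate $P_i$ evaluates true, so each robot can unambiguously determine the task to perform in its \Compute phase. Since the input configuration $C\in\I$ is in particular in $\I_\A$ (by definition $\I_\A$ contains all inputs), $C$ belongs to exactly one class $T_{i_0}$.

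First I would argue global progress by a monotonicity argument on the task index. Property $H_2$, proved in each of Lemmas~\ref{lem:T1}--\ref{lem:T7}, guarantees that from a configuration in $T_i$ no class $T_j$ with $j<i$ is ever reached; property $H_3$ guarantees that from any $T_i$ with $i\le 7$ a class $T_j$ with $j>i$ is reached within a finite number of \LCM cycles. Hence the index of the visited task is non-decreasing along any execution and strictly increases at each task transition. As the index is bounded above by $8$, after at most $8-i_0$ transitions---each occurring in finite time under the fair \async scheduler---the execution reaches class $T_8$. There predicate $\vs$ holds, i.e.\ $R$ and $F$ are similar, so the pattern is formed; since the move associated with $T_8$ is $\nil$ and no predicate other than $P_8$ can become true (by $\Prop_2$ and $H_2$), every robot performs only $\nil$ movements thereafter and the configuration is frozen. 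This is precisely the termination requirement of $\apf$: there is a finite $t^*$ with $C(t^*)$ similar to $F$ and $C(t)=C(t^*)$ for all $t\ge t^*$.

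Next I would check that the invariants the above argument tacitly uses are maintained throughout, which is exactly the content of property $H_1$ in the same lemmas. Namely, $\Aform$ never creates a multiplicity unless it is required by $F$, and never creates an undesired symmetric configuration: whenever a symmetry transiently appears it is a single reflection axis through the unique moving robot (the current $r_1$), so no two robots are ever simultaneously equivalent and pending. Consequently the guards $r_1$ and $r_n$ stay recognizable via the sum-of-distances function $D()$ and via the uniqueness of $\mbp(R)$ on which the reference system of Remarks~\ref{rem:afterT2}, \ref{remark:RS2}, and~\ref{remark:T5} rests; and because at every step at most one, uniquely determined, robot is allowed to move, the \async scheduler cannot produce an ambiguous pending move. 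The delicate end case is $T_7$, where a reflection may form as $r_1$ approaches $f_1$; Lemma~\ref{lem:T7} shows that the only surviving symmetries keep $r_1$ on the axis (when $F\setminus\{f_1\}$ is itself symmetric about it) or offer an equivalent target $f_1'$ reachable without crossing the axis, so the formation always completes.

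The main obstacle is not in the bookkeeping of the present theorem---which is essentially a finite, well-founded composition of the lemmas---but in the fact that it inherits all the difficulty concentrated in $H_1$, and especially in Lemma~\ref{lem:T7}: one must be sure that no rotation and no reflection carrying a robot onto an image of $r_1$ can arise during the final straight move, and that the transient symmetries that do arise never let a second robot mistake itself for $r_1$ or a second vertex be taken as $f_1$ in a way that desynchronizes the two equivalent robots under asynchrony. Once those case analyses are granted by the preceding lemmas, the theorem follows by the monotone-index argument above.
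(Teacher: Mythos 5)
Your proposal is correct and follows essentially the same route as the paper's own proof: it composes the per-task lemmas using properties $\Prop_1$--$\Prop_3$ to identify the unique active task and $H_1$--$H_3$ to obtain monotone, finite progress through the classes $T_{i_0},\ldots,T_8$, after which only $\nil$ moves occur. Your added remarks on the monotone-index bound and on why $H_1$ keeps the guards recognizable under asynchrony merely make explicit what the paper leaves implicit.
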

\begin{proof}
What we are going to show is that if all properties $H_1,\ldots,H_3$ hold, then for each possible execution of $\Aform$ there exists a time $t^*$ such that $C(t^*)$ is similar to $F$ and $C(t)=C(t^*)$ for any time $t\ge t^*$. This implies that the statement holds. 

Assume that $C$ is provided as input to $\Aform$. According to properties $\Prop_1,\ldots\Prop_3$, there exists a single task (say $T_i$) to be assigned to robots with respect to $C$. According to $H_1$, any configuration generated from $T_i$ (say $C'$) can be provided as input to $\Aform$.  Moreover, by $H_2$ and $H_3$, we can consider $C'$ belonging to some class (say $T_j$) different from $T_i$. 
According to this analysis, we can say that $C'$ will evolve during the time by changing its membership from class to class according to the forward transitions defined by Lemmas~\ref{lem:T1}--\ref{lem:T6}. 
Although the execution of $\Aform$ is infinite, property $H_3$ assures that any task is completed within a finite number of \LCM cycles, apart for $T_8$ that will be reached within finite time $t^*$. Moreover, as the only movement allowed in $T_8$ is the $\nil$ one, then the reached configuration will not change anymore. 
\qed
\end{proof}

\section{Extending the algorithm to graphs $\GS$ and $\GH$}\label{sec:extensions}

In this section, we briefly discuss how algorithm $\Aform$ can be extended to solve the $\apf$ problem for asymmetric configurations defined on $\GS$ or $\GH$. 

The proposed algorithm uses few geometric concepts, such as: bounding parallelogram, grid line, shortest path, moving along a line, quadrant. Moving from $\GT$ to $\GS$ all these concepts remain valid, with the simplification that the canonical directions are reduced to two and consequently $\bp(R)$ is unique. 
Moreover the moves do not need any changes and since predicates are independent from the underlying graph there is no need to change them. Hence the algorithm $\Aform$ remains the same and the proof of its correctness still hold taking into consideration the necessary variations needed due the reduction of the canonical directions. 

Moving to hexagonal grids, $\GH$ is considered as a sub graph of $\GT$ in which the center of the hexagons correspond to removed vertices. However by simply assuming the ``presence'' of the missing nodes and edges with respect to $\GT$, most of the geometric concepts introduced are still valid with the exception of ``movement along a line''. 
In fact it cannot move along a line but it needs to move along the edges of successive hexagons. 
For instance, in tasks $T_1$ and $T_2$, $\Aform$ simply requires that $r_1$ reaches the target via shortest paths, without assuming other constraints. So, even in $\GH$ the moves $m_1$ and $m_2$ remain valid. Conversely, during $T_3$ $r_n$ moves along the $Y$-axis according to $\Aform$. In this case we need to specify how the move unfolds since there are missing edges respect to $\GT$. In the following paragraph we revise the algorithm and give the details of the changes needed in order to extend $\Aform$ for hexagonal grids.

\subsection{Hexagonal grid graphs}

$\GH$ is considered as a sub graph of $\GT$ in which the center of the hexagons correspond to removed vertices. The basics concepts defined for $\GT$ naturally extend to $\GH$.  In particular:
\begin{itemize}
\item
the distance function between two vertices u and v in $\GH$ is the length of a shortest path connecting u and v in $\GT$;
\item
canonical directions in $\GH$ are the directions of the edges incident to a single vertex, the same introduced in $\GT$.
Given the canonical directions, we consider the same definition for an $mpb$ as in $\GT$.
Given a vertex v and oriented line L passing through v toward a canonical direction, vertex v can be classified in one of these three types:
\begin{itemize}
\item 
type 0: if v is not in $\GH$;
\item
type 1: if v has an edge following the orientation of L;
\item 
type 2 otherwise. 
\end{itemize}
The type of a leading corner is determined by the reading in the same direction that originates the sequence of $\mbp(F)$.
\item
the sequence of integers associated to a configuration of robots is the same as defined for $\GT$ placing a zero in the sequence in correspondence of a vertex in $\GT$ but not in $\GH$.
\end{itemize}

\begin{figure}[t]
   \graphicspath{{fig/}}
   \centering
   \def\svgwidth{0.6\columnwidth}
   {\large \scalebox{0.8}{
\begingroup%
  \makeatletter%
  \providecommand\color[2][]{%
    \errmessage{(Inkscape) Color is used for the text in Inkscape, but the package 'color.sty' is not loaded}%
    \renewcommand\color[2][]{}%
  }%
  \providecommand\transparent[1]{%
    \errmessage{(Inkscape) Transparency is used (non-zero) for the text in Inkscape, but the package 'transparent.sty' is not loaded}%
    \renewcommand\transparent[1]{}%
  }%
  \providecommand\rotatebox[2]{#2}%
  \newcommand*\fsize{\dimexpr\f@size pt\relax}%
  \newcommand*\lineheight[1]{\fontsize{\fsize}{#1\fsize}\selectfont}%
  \ifx\svgwidth\undefined%
    \setlength{\unitlength}{419.78366872bp}%
    \ifx\svgscale\undefined%
      \relax%
    \else%
      \setlength{\unitlength}{\unitlength * \real{\svgscale}}%
    \fi%
  \else%
    \setlength{\unitlength}{\svgwidth}%
  \fi%
  \global\let\svgwidth\undefined%
  \global\let\svgscale\undefined%
  \makeatother%
  \begin{picture}(1,0.65224019)%
    \lineheight{1}%
    \setlength\tabcolsep{0pt}%
    \put(0,0){\includegraphics[width=\unitlength,page=1]{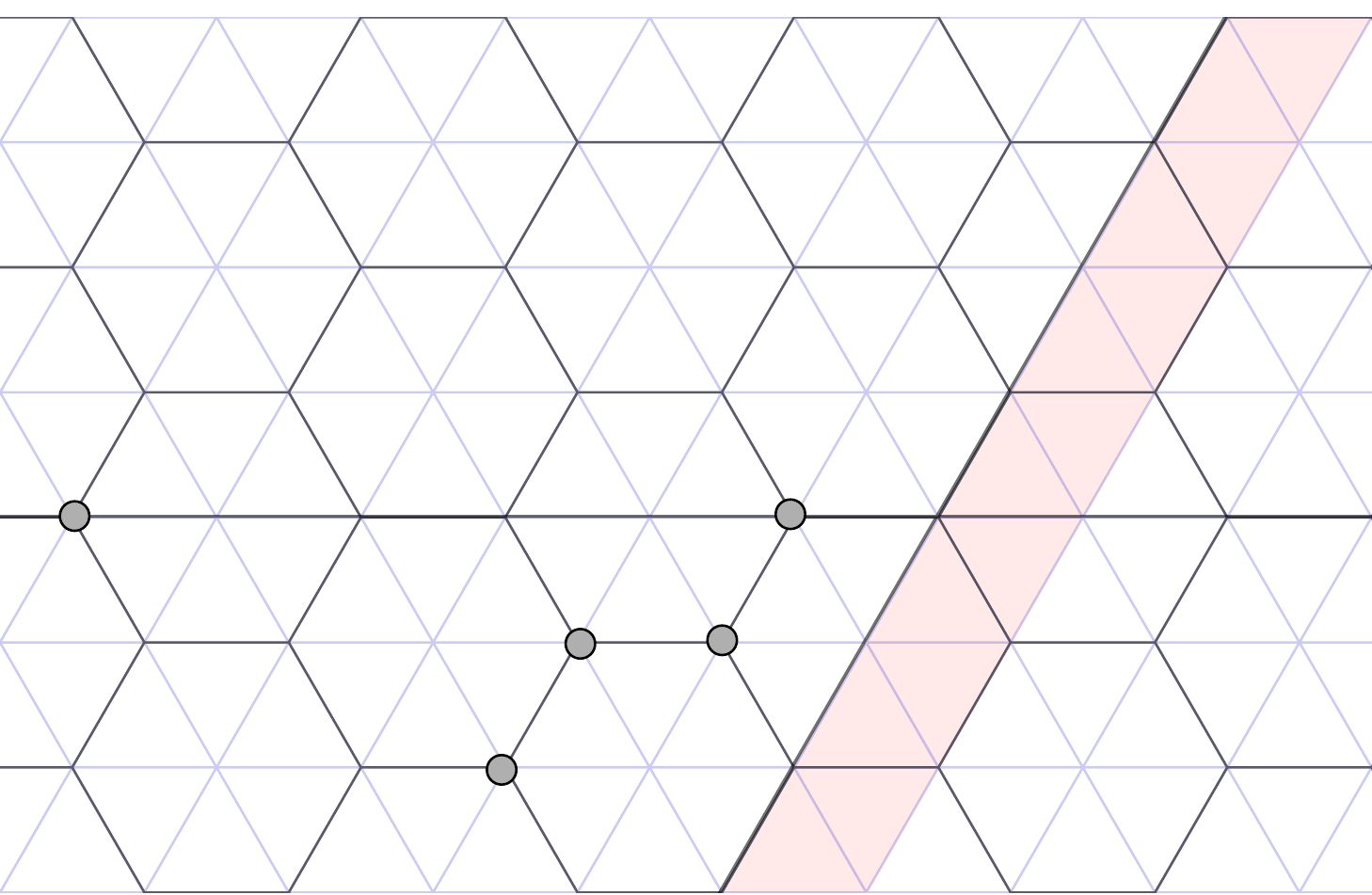}}%
    \put(0.07939354,0.28198267){\color[rgb]{0,0,0}\makebox(0,0)[lt]{\lineheight{1.25}\smash{\begin{tabular}[t]{l}$r_1$\end{tabular}}}}%
    \put(0.5895517,0.28593944){\color[rgb]{0,0,0}\makebox(0,0)[lt]{\lineheight{1.25}\smash{\begin{tabular}[t]{l}$r_5$\end{tabular}}}}%
    \put(0.42182826,0.19982012){\color[rgb]{0,0,0}\makebox(0,0)[lt]{\lineheight{1.25}\smash{\begin{tabular}[t]{l}$r_3$\end{tabular}}}}%
    \put(0.39031797,0.09832418){\color[rgb]{0,0,0}\makebox(0,0)[lt]{\lineheight{1.25}\smash{\begin{tabular}[t]{l}$r_2$\end{tabular}}}}%
    \put(0.53479931,0.19752493){\color[rgb]{0,0,0}\makebox(0,0)[lt]{\lineheight{1.25}\smash{\begin{tabular}[t]{l}$r_4$\end{tabular}}}}%
    \put(0.88192501,0.59094599){\color[rgb]{0,0,0}\makebox(0,0)[lt]{\lineheight{1.25}\smash{\begin{tabular}[t]{l}$r_6$\end{tabular}}}}%
    \put(0,0){\includegraphics[width=\unitlength,page=2]{grigliaEsagonale.pdf}}%
  \end{picture}%
\endgroup%
}}
   \caption{ Robot $r_6$ moving in a band during task $T_3$.}
\end{figure}

Further concepts will be introduced in the following description of the algorithm.
In the hexagonal graph, to go toward a direction, a robot either moves to the adjacent vertex if there is an edge connecting the two or it moves along the edges of the next hexagon ahead. Therefore a robot moves alternatively straight or diverting its path. As a result, the movement of a robot is enclosed in a band that is tall half the height of an hexagon while moving toward a direction. 
Given a robot and three canonical directions, there are two bands for each direction, the band selected each time by the robot is specified in a task when needed.
\begin{itemize}
\item
Task $T_1$: During this task robot $r_1$ moves away from the other robots until predicate $\vguno$ becomes true. For hexagonal grids predicate $\vguno$ is updated as follows:
\newline
$\vguno$ : $r_1$ is at a vertex such that exists a unique direction in which at least one of the lines passing through $r_1$ or one of its neighbours encounters each $\bp(R')$.

\item 
Task $T_2$: 
In this task $r_1$ moves at a distance $3\Delta$ from the origin. The origin here is redefined since it can be a vertex of $\GT$ not in $\GH$.  
Given $r_1$ and $r_n$, let $R''$ be $R''=R\setminus\{r_1,r_n\}$. Let L be the line that forms a canonical angle with $X$ passing through a robot in $R''$ and farthest from $r_1$. The origin is defined as the first vertex encountered from the intersection of $L$ and $X$ having the same type of the leading corner of $\mbp(F)$ read following the orientation of the Y-axis.

\item
Task $T_3$: In this task $r_n$ moves toward its target through any shortest path while keeping outside $mpb(R'')$ also during a detour.

\item
Task $T_4$: In this task $n-2$ robots reach their target one by one. This task develops in the same way as in $\GT$.
\item
Task $T_5$: In this task guard $r_n$ goes towards its target $f_n$. While moving parallel to the X-axis, $r_n$ moves in any band that keeps at least $2\Delta$ distance from $X$. While moving parallel to the Y-axis, $r_n$ moves in the band farthest from $r_1$.
Predicate $\vhrenne$ is updated as follows: 
\newline
$\vhrenne$ : $f_n=(x,y)$ and $r_n=(x',y')$, with $x'\leq x+1$ and $y'\geq y$   
\item
Task $T_6$: In this phase $r_1$ moves parallel to the shortest side of the parallelogram $\mbp(R)$ as to increase $d_r$.
During the movement $r_1$ moves in any band that keeps at least $3\w(\mbp(F))$ distance from $\mbp(F)$. We say that $r_1$ is in line with its target if  $d_r=d_f$ or $d_r-1=d_f$ since $r_1$ is moving within a band, so predicate $\vqfuno$ updates as follows:
\newline
$\vqfuno$ :  $\ell(\mbp(R) )  = \ell’ + \LSF$, for some $\ell’$ made of only $0$'s and just one $1$ in position
      $d_r$ and ($d_r = d_f \lor d_r-1=d_f$).
\item
Task $T_7$: In this task $r_1$ moves towards its target and in case of detours it moves in the direction such that $\ell(\mbp(R))$ decreases.
\end{itemize}

The same proofs of correctness given in Section \ref{ssec:correctness} for $\GT$ apply for $\GH$. 

\section{Conclusion}\label{sec:conclusion}
One may ask why in regular tessellation graphs $\apf$ deos not show the same solvability properties of the case of robots moving in the Euclidean plane. There, in fact, any leader configuration can be taken in input with the idea that it is always possible to break the possible symmetry by moving the leader. 
Here, in graphs, this strategy does not seem to be effective as the movements of the robots are restricted to the neighborhood. Hence, in a symmetric leader configuration it may happen the leader cannot move without causing a multiplicity which might prevent the formation of the final pattern (e.g., consider the case of a rotational configuration defined on $\GS$ with a robot on the center of rotation and all its four neighbors occupied). Hence, before moving the leader, a resolution strategy should make ``enough space'' around the leader. Actually, this approach has been followed in~\cite{C20}, a very recent work. In that paper, an algorithm able to break symmetries in leader configurations defined on $\GS$ or $\GT$ has been proposed. As a natural possible future work, it would be interesting to check whether this breaking symmetry algorithm can be composed with $\Aform$. If possible, this would completely solve the $\apf$ problem on both $\GS$ and $\GT$.

\bibliographystyle{splncs04}
\bibliography{../../../../global_references}

\end{document}